\newtheorem{theorem}{Theorem}[section]
\newtheorem{corollary}{Corollary}[section]
\newtheorem{lemma}{Lemma}[section]
\newtheorem{proposition}{Proposition}[section]
\newtheorem{definition}{Definition}[section]
\newtheorem*{example}{Example}
\newcommand{\mathsym}[1]{{}}
\newcommand{\unicode}[1]{{}}
\def\@email#1#2{%
 \endgroup
 \patchcmd{\titleblock@produce}
  {\frontmatter@RRAPformat}
  {\frontmatter@RRAPformat{\produce@RRAP{*#1\href{mailto:#2}{#2}}}\frontmatter@RRAPformat}
  {}{}
}%
\begin{document}

\preprint{AIP/123-QED}

\title[Complete Synchronization of Coupled Oscillators Based on Contraction Theory]{Complete Synchronization of Coupled Oscillators Based on Contraction Theory}
\author{Brian Y Zhang}
\email{bzhang8@nd.edu.}
\affiliation{ 
	Physics Department, University of Notre Dame
}%
\author{Masoud Asadi-Zeydabadi}%
\email{masoud.asadi-zeydabadi@ucdenver.edu.}

\author{Randall Tagg}
\email{randall.tagg@ucdenver.edu.}
\affiliation{%
	Physics Department, University of Colorado Denver
}%

\date{\today}

\begin{abstract}
	This paper studies contraction theory with the aim of exploring complete synchronization phenomenon in complex networks of coupled oscillators. We examine the conditions for complete synchronization in three network topologies: all-to-all, star, and ring. Specifically, we derive the conditions under which networks of linearly coupled (Duffing) van der Pol oscillators achieve complete synchronization. Additionally, by combining contraction theory with the trapping region method, we identify the conditions for complete synchronization of networks of linearly coupled Rayleigh van der Pol oscillators under specific initial conditions.
\end{abstract}

\maketitle

\begin{quotation}
The study of coupled oscillators and complex systems spans various fields, including mathematics, engineering, robotics, and biology\cite{mirollo,roy,golubitsky,winfree}. Many of these complex systems can be described by networks, where nodes symbolize individual oscillators and links represent the couplings between them. The advancement of network science has significantly enhanced our understanding of complex systems. Synchronization is a key area of interest in this field, as it helps explain the underlying mechanisms of diverse collective behaviors in complex systems\cite{pikovsky,ge}.
\end{quotation}

\section{Introduction}

Early studies observed synchronization phenomena in a variety of artificial devices. As research progressed, it became evident that synchronization is also prevalent in natural phenomena, such as Josephson junctions, nanomechanics, neurodynamics, the synchronous flickering of fireflies, the collective chirping of crickets, and recoil atomic lasers\cite{wiesenfeld,cross,nair,ermentrout,walker,javaloyes}. Understanding these self-organized cooperative states is crucial not only for comprehending group dynamics in complex systems but also for conducting relevant experiments and exploring potential applications\cite{montbrio,xu}.

\textit{Complete synchronization}, where both phase and amplitude are synchronized, is a significant aspect of this phenomenon. Contraction theory, rooted in fluid dynamics and differential geometry, has proven to be an effective tool for analyzing complete synchronization behaviors in nonlinear networks\cite{simpson,lohmiller}. Constructing a \textit{virtual system} of the network and identifying its \textit{contraction region} makes it possible to predict whether the network will achieve complete synchronization theoretically.

\textit{Van der Pol} (vdP) oscillators, initially developed to model electrical circuits employing vacuum tubes, are renowned for their ability to exhibit self-sustained oscillations\cite{van}. The dynamics of coupled vdP oscillators have been extensively studied due to their nonlinear damping properties\cite{kashchenko,wang2024}. Generalizations of vdP oscillators, such as \textit{Duffing van der Pol} (DvdP) and \textit{Rayleigh van der Pol} (RvdP) oscillators, offer richer dynamics, including bifurcation and chaos\cite{zhang2023,szemplinska}, and are particularly useful in modeling the chaotic and complex coordinated behaviors in real-world systems\cite{buslowicz,haken} and an electronic Central Pattern Generator that produces biped gait patterns for robotic systems \cite{de}. Using contraction theory, we analyze the conditions under which these coupled generalized vdP oscillators achieve complete synchronization. We also verify our results using numerical simulations.

To address the issue of networks without virtual systems, we introduced the concept of a \textit{virtual network}. This allows for constructing a higher-dimensional ``virtual system", enabling the identification of synchronization conditions for the original network. A critical challenge in applying contraction theory is ensuring that all oscillator trajectories in a network remain within the contraction region of the corresponding virtual system. For specific coupled-oscillator networks (e.g. coupled vdP and DvdP oscillators), adjusting coefficients can extend the contraction region to cover the entire phase plane, ensuring all trajectories remain within this region. However, for most complex networks (e.g. coupled RvdP oscillators), the contraction regions are finite regardless of coefficients. Our approach involves finding a trapping region within the contraction region of the virtual system, ensuring that trajectories starting within the trapping region remain within the contraction region.

This paper is organized as follows: Section 2 introduces basic contraction theory. Section 3 discusses the application of contraction theory to complete synchronization in various network topologies, including all-to-all, star, and ring topologies. Section 4 examines the dynamics of linearly coupled vdP, DvdP, and RvdP oscillators, providing the conditions for each to achieve complete synchronization.

In this paper, we will use the following notations: ``\(M \prec\ 0\) (\(\succ 0\))" means that the matrix \(M\) is negative (positive) definite; ``\(M \preceq 0\) (\(\succeq 0\))" means that \(M\) is negative (positive) semi-definite. 

\section{Contraction Theory}
\begin{definition}\label{def2.1}(\pmb{Contracting Region})
	Given a system equation of the form
	
	\begin{equation} \label{1}
		\dot{\pmb{x}}=\pmb{f}(\pmb{x}(t)),
	\end{equation}
	
	\noindent where \(\pmb{x}\in \mathbb{R}^n\) is a set of state variables and \(\pmb{f}:\mathbb{R}^n\to \mathbb{R}^n\) is a smooth nonlinear vector function. 
	A contraction region \(\mathfrak{C}\) of this system is a region of the state space where the symmetric part of the Jacobian \(\left(\frac{\partial f}{\partial \pmb{x}}\right)_s\) is uniformly negative definite, i.e., \(\mathfrak{C}\text{:=}\left\{\pmb{x}\in \mathbb{R}^n:\lambda _{\max } (\pmb{x}) <0\right\}\), where \(\lambda _{\max }(\pmb{x})\) is the largest eigenvalue of \(\frac{1}{2} \left(\frac{\partial \pmb{f}}{\partial \pmb{x}}+\left(\frac{\partial \pmb{f}}{\partial
		\pmb{x}}\right)^T\right)\).
\end{definition}

\begin{theorem}\label{thm2.1}
	If any two trajectories, \(\pmb{x}_1(t)\) and \(\pmb{x}_2(t)\), of a system of the form (\ref{1}), starting from different initial conditions, remain within \(\mathfrak{C}\), then they converge exponentially
	to each other.
\end{theorem}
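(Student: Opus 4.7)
The plan is to use the standard variational argument of contraction theory: upgrade the hypothesis on the symmetric Jacobian to a bound on infinitesimal virtual displacements, then integrate along a one-parameter family of trajectories to bound the finite distance between $\pmb{x}_1(t)$ and $\pmb{x}_2(t)$.

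First I would smoothly connect the two initial conditions by a curve $\pmb{x}(0,s)$, $s\in[0,1]$, with $\pmb{x}(0,0)=\pmb{x}_1(0)$ and $\pmb{x}(0,1)=\pmb{x}_2(0)$, chosen so that the induced trajectories $\pmb{x}(t,s)$ all remain in $C$ for the times of interest. The virtual displacement $\delta\pmb{x}:=\partial\pmb{x}/\partial s$ then satisfies the variational equation
\begin{equation*}
\frac{\partial}{\partial t}\,\delta\pmb{x}=\frac{\partial\pmb{f}}{\partial\pmb{x}}\,\delta\pmb{x},
\end{equation*}
obtained by differentiating (\ref{1}) in $s$ and exchanging the order of partials.

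Next I would compute the evolution of the squared infinitesimal length $\delta\pmb{x}^{T}\delta\pmb{x}$. A direct calculation gives
\begin{equation*}
\frac{d}{dt}\bigl(\delta\pmb{x}^{T}\delta\pmb{x}\bigr)=2\,\delta\pmb{x}^{T}\left(\frac{\partial\pmb{f}}{\partial\pmb{x}}\right)_{\!s}\delta\pmb{x}\le 2\lambda_{\max}(\pmb{x})\,\delta\pmb{x}^{T}\delta\pmb{x},
\end{equation*}
where only the symmetric part of the Jacobian contributes to the quadratic form. Since the trajectory stays in $C$ and $\lambda_{\max}$ is uniformly bounded above by some $-\epsilon<0$ there (by Definition 2.1), Gronwall's inequality yields the exponential decay $\|\delta\pmb{x}(t,s)\|\le\|\delta\pmb{x}(0,s)\|\,e^{-\epsilon t}$ for every $s$.

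Finally, I would pass from infinitesimal to finite by integrating the path length: for each $t$,
\begin{equation*}
\|\pmb{x}_2(t)-\pmb{x}_1(t)\|\;\le\;\int_{0}^{1}\|\delta\pmb{x}(t,s)\|\,ds\;\le\;e^{-\epsilon t}\int_{0}^{1}\|\delta\pmb{x}(0,s)\|\,ds,
\end{equation*}
which is exponential convergence with rate $\epsilon$. The main obstacle I expect is the first step: the argument requires that an entire one-parameter family of trajectories connecting $\pmb{x}_1$ and $\pmb{x}_2$ lie inside $C$, which is automatic if $C$ is convex (or, more generally, forward-invariant and path-connected in a way compatible with the flow), but otherwise needs to be built into the hypothesis or handled by choosing the connecting curve carefully. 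I would state this convexity/invariance condition explicitly, since the theorem as written treats $C$ abstractly and the variational argument is meaningful only when such a connecting family exists.
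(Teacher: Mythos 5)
Your proposal is correct and follows essentially the same route as the paper's Appendix A proof: the variational bound $\frac{d}{dt}\bigl(\delta\pmb{x}^{T}\delta\pmb{x}\bigr)\le 2\lambda_{\max}(\pmb{x})\,\delta\pmb{x}^{T}\delta\pmb{x}$, exponential decay of the infinitesimal displacement, and then path integration to pass to the finite distance between $\pmb{x}_1(t)$ and $\pmb{x}_2(t)$. Your closing observation that the entire connecting family of trajectories must remain in $C$ (guaranteed by convexity or forward invariance, but not by the hypothesis as literally stated) is a genuine subtlety that the paper's proof glosses over with the phrase ``using path integration,'' so your treatment is, if anything, more careful than the original.
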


\begin{proof}
	See Ref. \onlinecite{lohmiller}.
\end{proof}

In order to apply this theorem to the synchronization of coupled oscillators, the concept of the \textit{virtual system} needs to be introduced\cite{wang,ruzitalab}: a system expressed in the state variables of the original network of the coupled oscillators and a set of new virtual state
variables so it can recover the trajectory of each node in the original network by substituting the state variables of each node
for the virtual state variables. By constructing a virtual system, the multiple coupled systems are {``}condensed{''} into a single system
as described in Eq.~(\ref{1}), to which the contraction theory can be applied.

\section{Complete Synchronization in different network topologies}
Consider a network of nonlinear systems

\begin{equation} \label{2}
	\dot{\pmb{x}}_i=\pmb{f}_i\left(\pmb{x}_i\right)+\pmb{u}_i, i=1,2,\cdots ,N,
\end{equation}

\noindent where \(\pmb{x}_i=\left({x_{i}}_1 \; {x_{i}}_2 \; \cdots \; {x_{i}}_n\right){}^T\in \mathbb{R}^n\) is the state variables of node \(i\),
\(\pmb{f}_i:\mathbb{R}^n\to \mathbb{R}^n\) are nonlinear smooth vector functions, \(\pmb{u}_i\) are coupling
functions which depend on the difference between the state variables of \(i^{\text{th}}\) node and those of others that are linked with it.

\begin{definition} \label{def3.1}(\pmb{complete synchronization})
	The network described by (\ref{2}) is said to be completely synchronized if \(\forall j= 1,
	\cdots ,n\), \(\exists \pmb{x}_i(0)\in \mathbb{R}^n\) for \(i=1,\cdots ,N\) such that
	
	\[\lim_{t\to \infty }  x_{1j}(t)=\lim_{t\to \infty }  x_{2j}(t)=\cdots =\lim_{t\to \infty }  x_{N j}(t)\neq 0.\]
\end{definition}

\begin{theorem}\label{thm3.1}
	Consider a network of the form (\ref{2}). Assume that there exists a virtual system of this network:
	\begin{equation} \label{3}
		\dot{\pmb{y}}=\pmb{\Phi} \left(\pmb{y},\pmb{x}_1,\pmb{x}_2,\text{...},\pmb{x}_N\right),
	\end{equation}
	
	\noindent where \(\pmb{y}\in \mathbb{R}^n\) is the virtual state variable, such that
	\begin{equation*}
		\begin{split}
			\Phi \left(\pmb{x}_1,\pmb{x}_1,\pmb{x}_2,\text{...},\pmb{x}_N\right)&=\pmb{f}_1\left(\pmb{x}_1\right)+\pmb{u}_1,     \\
			\Phi \left(\pmb{x}_2,\pmb{x}_1,\pmb{x}_2,\text{...},\pmb{x}_N\right)&=\pmb{f}_2\left(\pmb{x}_2\right)+\pmb{u}_2,             \\
			&\vdots     \\
			\Phi \left(\pmb{x}_N,\pmb{x}_1,\pmb{x}_2,\text{...},\pmb{x}_N\right)&=\pmb{f}_N\left(\pmb{x}_N\right)+\pmb{u}_N.
		\end{split}
	\end{equation*}
	
	\noindent Then if the trajectories of the oscillators, \(\pmb{x}_1(t),\pmb{x}_2(t),\text{...},\pmb{x}_{N}(t)\), remain within the contraction region of the virtual system (\ref{3}) for some initial conditions, they are completely synchronized.
\end{theorem}

\begin{proof}
	See Ref. \onlinecite{wang}.
\end{proof}

In this paper, we only consider identical oscillators with linear couplings. Note that for a non-identical two-coupled-oscillator system with linear couplings, one may construct its virtual system as \small

\begin{equation*}
	\dot{\pmb{y}}=(\pmb{y}-\pmb{x}_2)\frac{\pmb{f}_1\left(\pmb{y}\right)}{\pmb{x}_1-\pmb{x}_2}+(\pmb{y}-\pmb{x}_1)\frac{\pmb{f}_2\left(\pmb{y}\right)}{(\pmb{x}_2-\pmb{x}_1)}+A_1\pmb{x}_2+A_2\pmb{x}_1-(A_1+A_2)\pmb{y}.
\end{equation*}

\noindent \normalsize where \(A_1\) and \(A_2\) are oscillators 1 and 2 coupling matrices, respectively. However, this system forms singularities when the trajectories of the two oscillators coincide, and thus the right-hand-side function is not smooth. Therefore, the contraction theory is generally not applicable to the coupled non-identical oscillators. The coupled non-identical oscillators cannot generally be completely synchronized at an exponential rate. 

Not all networks have virtual systems of the form (\ref{3}). We require the networks to possess some symmetry. In the following, we discuss three common network topologies --- all-to-all topology, star topology, and ring topology.

\subsection{All-to-all Topology}
\begin{proposition}\label{prop3.1}
	Consider \(N\) all-to-all coupled identical oscillators with symmetric linear couplings described by
	\begin{equation} \label{5*}
		\dot{\pmb{x}}_i=\pmb{f}\left(\pmb{x}_i\right)+\sum_{j=1}^{N} A_j(\pmb{x}_j-\pmb{x}_i), \forall i =1,\cdots
		,N,
	\end{equation} 
	Here, the ``symmetric coupling" means that each oscillator has the same effect on all other oscillators. This network is completely synchronized if \(\pmb{x}_{i}(t)\) are always inside the region where \({J_{\pmb{f}}}_s-\sum_{j=1}^{N} {A_{j}}_s \prec 0\) for all \(i\), where \({J_{\pmb{f}}}_s\) refers to the symmetric part of the Jacobian of \(\pmb{f}\).
\end{proposition}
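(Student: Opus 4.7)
The plan is to mimic the construction used in Proposition 3.1, exploiting the symmetry of the all-to-all coupling to condense the $N$ equations into a single virtual system whose Jacobian yields the stated contraction condition. First I would write down a candidate virtual system by collecting the linear coupling terms and replacing the ``self'' occurrence of $\pmb{x}_i$ with the virtual variable $\pmb{y}$:
\begin{equation*}
\dot{\pmb{y}} = \pmb{f}(\pmb{y}) + \sum_{j=1}^{N} A_j \pmb{x}_j - \Bigl(\sum_{j=1}^{N} A_j\Bigr) \pmb{y}.
\end{equation*}
The key check is that substituting $\pmb{y} = \pmb{x}_i$ reproduces the $i$-th equation of the network for \emph{every} $i$; this is where the assumption that each oscillator exerts the same coupling action $A_j$ on all its neighbors (the ``symmetric coupling'' condition) is essential, because it lets $\sum_{j} A_j \pmb{x}_j - \bigl(\sum_{j} A_j\bigr) \pmb{x}_i$ collapse to $\sum_{j} A_j (\pmb{x}_j - \pmb{x}_i)$.

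Next I would compute the Jacobian of the right-hand side of the virtual system with respect to $\pmb{y}$. Since the coupling terms are affine in $\pmb{y}$ with coefficient matrix $-\sum_{j=1}^{N} A_j$, one gets
\begin{equation*}
\frac{\partial \pmb{\Phi}}{\partial \pmb{y}} = J_{\pmb{f}} - \sum_{j=1}^{N} A_j,
\end{equation*}
so its symmetric part is ${J_{\pmb{f}}}_s - \sum_{j=1}^{N} {A_j}_s$ (which I take to be the intended meaning of the expression displayed in the proposition). By Definition 2.1, the region in which this symmetric part is negative definite is precisely the contraction region of the virtual system.

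The conclusion then follows directly from Theorem 3.1: since each trajectory $\pmb{x}_i(t)$ is a particular solution of the virtual system (by the substitution property verified above), if all of them remain in this contraction region they must converge exponentially to a common trajectory, which is exactly complete synchronization. I do not anticipate any real obstacle here; the whole argument is a direct generalization of Proposition 3.1 from $N = 2$ to arbitrary $N$. The one subtle point worth stating carefully in the write-up is the role of the symmetry assumption on the couplings, because without it the affine decomposition above would not collapse back into the original coupled network when $\pmb{y}$ is replaced by any $\pmb{x}_i$, and the virtual system would fail to exist in the form (\ref{3}).
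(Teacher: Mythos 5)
Your proposal is correct and follows essentially the same route as the paper: the paper constructs the identical virtual system $\dot{\pmb{y}}=\pmb{f}(\pmb{y})+\sum_{j=1}^{N} A_j\pmb{x}_j-(\sum_{j=1}^{N} A_j)\pmb{y}$ and then defers to the two-oscillator argument (Proposition 3.2), which is exactly the substitution check, Jacobian computation, and appeal to Theorem 3.1 that you spell out. Your reading of the displayed condition as ${J_{\pmb{f}}}_s-\sum_{j=1}^{N}{A_j}_s\prec 0$ is also the intended one.
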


\begin{proof}
	See Appendix \ref{appA:subsec1}.
\end{proof}

\subsection{Star Topology}
\begin{proposition}\label{prop3.2}
	Consider \(N\) identical oscillators coupled by a star network with symmetric linear couplings described by
	\begin{subequations} \label{8}
		\begin{align}
			\dot{\pmb{x}}_1&=\pmb{f}\left(\pmb{x}_1\right)+\sum_{j=2}^{N} A_j(\pmb{x}_j-\pmb{x}_1), \label{8a}\\
			\dot{\pmb{x}}_i&=\pmb{f}\left(\pmb{x}_i\right)+A_1(\pmb{x}_1-\pmb{x}_i), \text{  }\forall i =2,\cdots \label{8b}
			,N,
		\end{align}
	\end{subequations}
	This network is completely synchronized if \(\pmb{x}_1(t)\) and \(\pmb{x}_2(t)\) are always inside the region where \({J_{\pmb{f}}}_s-\sum_{j=1}^{N} {A_{j}}_s \prec 0\) and \(\pmb{x}_{i}(t)\) are always inside the region where \({J_{\pmb{f}}}_s-{A_{1}}_s \prec 0\) for all \(i\) from 2 to \(N\).
\end{proposition}

\begin{proof}
	See Appendix \ref{appA:subsec2}.
\end{proof}

The limitation of applying the above propositions is that when the symmetric part of the Jacobian of the virtual system of a network is not uniformly negative definite on the entire state space, it is difficult to determine whether the trajectories of the oscillators in this network are confined in the contraction region of the virtual system. Note that even if a trajectory starts in a contraction region, it may leave the region later. One way to determine whether a trajectory will remain within a contraction region based on the location of the trajectory's starting point is to find a \textit{trapping region} that is inside the contraction region. A trapping region \(N\) is a compact subset of the state space such that the flow of the system is inward everywhere on the boundary of \(N\), and therefore every trajectory that starts within \(N\) will remain there for all future time, details of which can be found in Section 4.

\subsection{Ring Topology}
We cannot generally find virtual systems of bidirectional ring networks in which each oscillator affects its two neighboring oscillators and unidirectional ring networks in which each oscillator affects only its next oscillator. Instead, we introduce a multivariate ``virtual network" as follows. For a network (\ref{2}), we attempt to construct 

\begin{equation*}
	\dot{\pmb{y}_i}=\pmb{\Phi }\left(\pmb{y}_i,\pmb{y}_{i+1},\text{...},\pmb{y}_{i+N-1},\pmb{x}_1,\pmb{x}_2,\text{...},\pmb{x}_N\right), i =1,\cdots
	,N,
\end{equation*}

\noindent for virtual variables \(\pmb{y}_1,\text{...},\pmb{y}_N\), which satisfies 

\begin{equation}\label{20}
	\pmb{\Phi}\left(\pmb{x}_i,\pmb{x}_{i+1},\text{...},\pmb{x}_{i+N-1},\pmb{x}_1,\pmb{x}_2,\text{...},\pmb{x}_N\right)=\pmb{f}\left(\pmb{x}_i\right)+\pmb{u}_i, \forall i =1,\cdots
	,N.
\end{equation}

\noindent where all subscripts are calculated modulo \(N\). We can turn this virtual network into a single \(nN\)-dimensional virtual system of \(\pmb{Y}\) by ``concatenating" all original \(n\)-dimensional virtual state vectors \(\pmb{y}_i\) to form an \(nN\)-dimensional virtual state variable \(\pmb{Y}\), that is

\begin{equation*}
	\pmb{Y}=\left(\pmb{y}_{1}\text{  }\pmb{y}_{2}\text{  }\cdots \text{  }\pmb{y}_{N}\right){}^T\in \mathbb{R}^{nN}.
\end{equation*}

\noindent Then we have

\begin{equation}\label{21}
	\dot{\pmb{Y}}=\pmb{\Psi}\left(\pmb{Y},\pmb{x}_1,\pmb{x}_2,\text{...},\pmb{x}_N\right)\equiv\begin{pmatrix} 
		\pmb{\Phi} \left(\pmb{y}_1,\pmb{y}_2,\text{...},\pmb{y}_N,\pmb{x}_1,\pmb{x}_2,\text{...},\pmb{x}_N\right)\\ 
		\pmb{\Phi} \left(\pmb{y}_2,\text{...},\pmb{y}_N,\pmb{y}_1,\pmb{x}_1,\pmb{x}_2,\text{...},\pmb{x}_N\right) \\ 
		\vdots \\
		\pmb{\Phi} \left(\pmb{y}_N,\pmb{y}_1,\text{...},\pmb{y}_{N-1},\pmb{x}_1,\pmb{x}_2,\text{...},\pmb{x}_N\right)
	\end{pmatrix}
\end{equation}

\begin{theorem}\label{thm3.2}
	If network (\ref{2}) has a virtual system (\ref{21}) which is contracting with respect to \(\pmb{Y}\), then the network is completely synchronized regardless of the initial conditions.
\end{theorem}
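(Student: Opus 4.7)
The plan is to produce several distinct solutions of the $nN$-dimensional virtual system (\ref{21}) by exploiting its cyclic structure, and then apply Theorem 2.1 (contraction) to conclude that these solutions merge exponentially, which will force the original oscillator trajectories to synchronize.

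First I would observe that the concatenated trajectory $\pmb{Y}^{(0)}(t) := (\pmb{x}_1(t),\pmb{x}_2(t),\ldots,\pmb{x}_N(t))^T$ is itself a solution of (\ref{21}). This follows directly from the defining identity (\ref{20}): substituting $\pmb{y}_i=\pmb{x}_i$ for all $i$ in the $i$-th block of $\pmb{\Psi}$ produces $\pmb{\Phi}(\pmb{x}_i,\pmb{x}_{i+1},\ldots,\pmb{x}_{i+N-1},\pmb{x}_1,\ldots,\pmb{x}_N)=\pmb{f}(\pmb{x}_i)+\pmb{u}_i=\dot{\pmb{x}}_i$, which matches the $i$-th block of $\dot{\pmb{Y}}^{(0)}$.

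The key step is to construct $N-1$ further solutions by cyclic shifts. For each $k\in\{1,\ldots,N-1\}$, set
\begin{equation*}
\pmb{Y}^{(k)}(t) := \bigl(\pmb{x}_{1+k}(t),\,\pmb{x}_{2+k}(t),\,\ldots,\,\pmb{x}_{N+k}(t)\bigr)^T,
\end{equation*}
with indices reduced modulo $N$. The $i$-th block of $\pmb{\Psi}$ evaluated at $\pmb{Y}^{(k)}$ is $\pmb{\Phi}(\pmb{x}_{i+k},\pmb{x}_{i+k+1},\ldots,\pmb{x}_{i+k+N-1},\pmb{x}_1,\ldots,\pmb{x}_N)$, which by (\ref{20}) applied with index $i+k$ equals $\pmb{f}(\pmb{x}_{i+k})+\pmb{u}_{i+k}=\dot{\pmb{x}}_{i+k}$, agreeing with the $i$-th block of $\dot{\pmb{Y}}^{(k)}$. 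Hence every cyclic shift $\pmb{Y}^{(k)}$ is a trajectory of the virtual system (\ref{21}). The only subtle bookkeeping here, and the step I expect to require the most care, is keeping track of the two kinds of arguments of $\pmb{\Phi}$ simultaneously, since the first block of slots (the virtual $\pmb{y}$-variables) is cyclically permuted while the second block (the external $\pmb{x}$-inputs) is not; one must verify that condition (\ref{20}) applies with the shifted index without disturbing the ordering of the external inputs.

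Finally, because (\ref{21}) is contracting with respect to $\pmb{Y}$ on the whole state space, Theorem 2.1 guarantees that any two of its solutions converge exponentially to each other regardless of initial data. Applying this to $\pmb{Y}^{(0)}$ and $\pmb{Y}^{(k)}$ and reading off the $i$-th block yields $\pmb{x}_i(t)-\pmb{x}_{i+k}(t)\to 0$ exponentially for every $i$ and every $k\in\{1,\ldots,N-1\}$, which is precisely complete synchronization in the sense of Definition 3.1 (the non-triviality of the common limit being inherited from the oscillatory nature of the $\pmb{f}$ dynamics rather than from the contraction argument itself).
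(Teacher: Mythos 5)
Your proposal is correct and follows essentially the same route as the paper: both identify the concatenated trajectory and its cyclic shifts as particular solutions of the virtual system (\ref{21}) via the defining identity (\ref{20}), then invoke contraction (Theorem 2.1) to force these solutions to converge, which reads off blockwise as synchronization. The only cosmetic difference is that you verify all $N-1$ shifts explicitly, whereas the paper uses just the shift by one and closes the chain by transitivity.
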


\begin{proof}
	Since the function \(\pmb{\Phi}\) satisfies Eq.~(\ref{20}), the trajectory \(\left(\pmb{x}_{1}\text{  }\pmb{x}_{2}\text{  }\cdots \text{  }\pmb{x}_{N}\right){}^T\) formed by concatenating the trajectories of all the oscillators in the original network (\ref{2}) and all its cyclic permutations are particular solutions of the system (\ref{21}). 
	In particular, \(\left(\pmb{x}_{1}\text{  }\pmb{x}_{2}\text{  }\cdots \text{  }\pmb{x}_{N}\right){}^T\) and \(\left(\pmb{x}_{2}\text{  }\cdots \text{  }\pmb{x}_{N}\text{  }\pmb{x}_{1}\right){}^T\) are two solutions of (\ref{21}). Since the system is contracting, they converge exponentially to each other, which implies that \(\pmb{x}_{1}\sim\pmb{x}_{2}, \text{  }\cdots,\text{  }\pmb{x}_{N-1}\sim\pmb{x}_{N}, \text{  and }\pmb{x}_{N}\sim\pmb{x}_{1}\), where \(\pmb{a}\sim \pmb{b}\) denotes that \(\pmb{a}\) and \(\pmb{b}\) are converged. Therefore, all oscillators converge into the same synchronous state vector.
\end{proof} 

We now consider ring networks consisting of \(N\) identical coupled oscillators whose links are either unidirectional with identical linear coupling or bidirectional with different linear couplings. Such networks can be expressed as the following two equations,

\begin{subequations}
	\begin{equation}\label{22a*}	
		\dot{\pmb{x}}_i=\pmb{f}\left(\pmb{x}_i\right)+A(\pmb{x}_{i-1}-\pmb{x}_i), \text{  }\forall i =1,\cdots
		,N.
	\end{equation}
	\begin{equation}\label{22b*}	
		\dot{\pmb{x}}_i=\pmb{f}\left(\pmb{x}_i\right)+A_{i+1,i}(\pmb{x}_{i+1}-\pmb{x}_{i})+A_{i-1,i}(\pmb{x}_{i-1}-\pmb{x}_{i}), \text{  }\forall i =1,\cdots
		,N,
	\end{equation}
\end{subequations}

\noindent where \(A_{i,j}\) denotes coupling from the \(i^{th}\) to \(j^{th}\) node, and the subscripts \(i,j\) are calculated modulo \(N\).

Wang and Slotine discussed the conditions for synchronization of an unidirectional ring network when \(A\) is a symmetric matrix and of a bidirectional ring network when the couplings are \textit{interactional}, i.e., \(A_{i,j}=A_{j,i}\)\cite{wang}. In the following, we relax both of these conditions. We need the following lemma to prove Theorems \ref{thm3.3} and \ref{thm3.4}.

\begin{lemma}\label{lem3.1}
	Let \(M=\left(\begin{array}{cc}
		A &B \\
		B^{T} & D \\
	\end{array}\right)\) be a positive semi-definite \(2\times2\) block matrix. Define \(M^{m,n}\) to be a block matrix of the form
	\begin{equation*}
		M^{m,n}=\left(\begin{array}{ccccc}
			&\vdots& &\vdots& \\
			\cdots&(A)_{mm}&\cdots&(B)_{mn}&\cdots  \\
			&\vdots& &\vdots& \\
			\cdots&(B^{T})_{nm}&\cdots&(D)_{nn}&\cdots  \\
			&\vdots& &\vdots& \\
		\end{array}\right).
	\end{equation*}	
	\noindent where \((H)_{ij}\) denotes that the block \(H\) is in the \(i^{th}\) ``row"  and \(j^{th}\) ``column" of the matrix  \(M^{m,n}\). Here, all the blocks in \(M^{m,n}\) are zero matrices, except for the four blocks written out. Then for any unequal \(m\) and \(n\), \(M \succeq 0\) iff \(M^{m,n} \succeq 0\).
\end{lemma}
One can be easily proved this lemma by noticing that \(\pmb{x}^TM\pmb{x}=\pmb{y}^TM^{m,n}\pmb{y}\), where \(\pmb{x}\) is an arbitrary vector which has dimensions that are compatible with matrix \(M\), and \(\pmb{x}\) is exported to vector \(\pmb{y}\) by arbitrarily choosing the missing elements to have appropriate dimensions according to \(M^{m,n}\). 

\begin{theorem}\label{thm3.3}
	The unidirectional ring network (\ref{22a*}) is completely synhcronized if \({J_{f}}_s+A_{s}\prec 0\) on \(D\), \(A_{s} \succ 0\), and \(4A_{s}-A A_{s}^{-1} A^{T} \succeq 0\).
\end{theorem}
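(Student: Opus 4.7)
The plan is to apply Theorem 3.4 to a virtual network tailored to the ring structure. Since the coupling $A(\pmb{x}_{i-1}-\pmb{x}_i)$ involves only the immediate predecessor of the $i^{\text{th}}$ oscillator, a choice of $\pmb{\Phi}$ that satisfies condition (\ref{20}) is
\[
\pmb{\Phi}(\pmb{y}_i,\pmb{y}_{i+1},\dots,\pmb{y}_{i+N-1},\pmb{x}_1,\dots,\pmb{x}_N)\;=\;\pmb{f}(\pmb{y}_i)+A(\pmb{y}_{i+N-1}-\pmb{y}_i),
\]
so that the concatenated virtual system (\ref{21}) reads $\dot{\pmb{y}}_i=\pmb{f}(\pmb{y}_i)+A(\pmb{y}_{i-1}-\pmb{y}_i)$ for $i=1,\dots,N$ (indices mod $N$). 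The task then reduces to showing that the $nN\times nN$ Jacobian $J_{\pmb{\Psi}}=\partial\pmb{\Psi}/\partial\pmb{Y}$ has uniformly negative definite symmetric part on $D^N$.

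A direct block computation shows that, with $B_i:=J_{\pmb{f},s}(\pmb{y}_i)-A_s$, the matrix $J_{\pmb{\Psi},s}$ carries $B_i$ on the diagonal, $\tfrac{1}{2}A$ at block position $(i,i-1)$, and $\tfrac{1}{2}A^{T}$ at position $(i,i+1)$. Splitting each diagonal block equally and regrouping the quadratic form by adjacent index pairs gives
\[
\pmb{Y}^{T} J_{\pmb{\Psi},s}\pmb{Y}\;=\;\tfrac{1}{2}\sum_{i=1}^{N}\begin{pmatrix}\pmb{y}_i\\\pmb{y}_{i+1}\end{pmatrix}^{\!T}\!\begin{pmatrix}B_i & A^{T}\\ A & B_{i+1}\end{pmatrix}\!\begin{pmatrix}\pmb{y}_i\\\pmb{y}_{i+1}\end{pmatrix},
\]
so it is enough to check that every $2n\times 2n$ pair block is uniformly negative definite on $D\times D$.

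For this I would invoke the Schur complement. The strict hypothesis $J_{\pmb{f},s}+A_s\prec 0$ on $D$ provides a uniform slack $\delta>0$ with $J_{\pmb{f},s}\preceq -A_s-\delta I$, whence $B_i\preceq -2A_s-\delta I\prec 0$. Monotonicity of matrix inversion on the negative-definite cone then yields $B_i^{-1}\succeq -(2A_s+\delta I)^{-1}$; sandwiching by $A$ and $A^{T}$ and using $(2A_s+\delta I)^{-1}\preceq \tfrac{1}{2}A_s^{-1}$ gives $AB_i^{-1}A^{T}\succeq -\tfrac{1}{2}AA_s^{-1}A^{T}$. Consequently,
\[
B_{i+1}-AB_i^{-1}A^{T}\;\preceq\;-2A_s-\delta I+\tfrac{1}{2}AA_s^{-1}A^{T}\;=\;-\tfrac{1}{2}\bigl(4A_s-AA_s^{-1}A^{T}\bigr)-\delta I\;\preceq\;-\delta I,
\]
where the last step uses the third hypothesis $4A_s-AA_s^{-1}A^{T}\succeq 0$. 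Combined with $B_i\prec 0$, the Schur-complement criterion makes each pair block uniformly negative definite, hence $J_{\pmb{\Psi},s}\prec 0$ on $D^N$, and Theorem 3.4 delivers complete synchronization regardless of initial conditions.

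The main obstacle I anticipate is the asymmetry of $A$: because $A\neq A^{T}$ in general, the off-diagonal blocks $\tfrac{1}{2}A$ and $\tfrac{1}{2}A^{T}$ cannot be merged into a single symmetric block, so a naive diagonal-dominance or Gershgorin argument fails. The pairwise regrouping of the quadratic form is what isolates $A$ and $A^{T}$ together into a $2\times 2$ block on which the Schur complement acts, and carefully carrying the strict slack $\delta$ from the first hypothesis is what bridges the gap to the merely semi-definite third hypothesis; everything else is routine block-matrix manipulation.
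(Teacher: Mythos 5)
Your proof is correct, and it rests on the same two pillars as the paper's argument --- the cyclic virtual network $\dot{\pmb{y}}_i=\pmb{f}(\pmb{y}_i)+A(\pmb{y}_{i-1}-\pmb{y}_i)$ together with a reduction to $2\times 2$ block matrices indexed by adjacent pairs, handled via the Schur complement --- but the decomposition you use is genuinely different. The paper splits additively: it writes $(\partial\pmb{\Psi}/\partial\pmb{Y})_s$ as the block-diagonal matrix with blocks ${J_{\pmb{f}i}}_s+A_s$ (negative definite by the first hypothesis) minus a \emph{constant} matrix, which it expresses as $\sum_{i=1}^{N-1}\tilde{A}_1^{i,i+1}+\tilde{A}_2^{1,N}$ with $\tilde{A}_1=\bigl(\begin{smallmatrix}A_s & -A^T/2\\ -A/2 & A_s\end{smallmatrix}\bigr)$; the hypotheses $A_s\succ 0$ and $4A_s-AA_s^{-1}A^T\succeq 0$ are then, via Lemma 3.7, precisely the positive semi-definiteness of $\tilde{A}_1$ (hence of $\tilde{A}_2$), and Lemma 3.8 lifts this to the embedded blocks. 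You instead keep the Jacobians inside the pair blocks, regroup the whole quadratic form as $\tfrac{1}{2}\sum_i z_i^T\bigl(\begin{smallmatrix}B_i & A^T\\ A& B_{i+1}\end{smallmatrix}\bigr)z_i$, and prove each state-dependent block uniformly negative definite by chaining Loewner inequalities ($B_i\preceq -2A_s-\delta I$, antitonicity of inversion on the definite cone, $(2A_s+\delta I)^{-1}\preceq\tfrac12 A_s^{-1}$) down to $B_{i+1}-AB_i^{-1}A^T\preceq -\tfrac12(4A_s-AA_s^{-1}A^T)-\delta I\preceq-\delta I$. Your route costs a little more matrix analysis --- the inversion-monotonicity step and the explicit $\delta$-bookkeeping, which the paper leaves implicit in the word ``uniformly'' --- but it makes the quantitative contraction rate visible and shows transparently why the merely semi-definite third hypothesis suffices once the first is strict; the paper's additive split is cleaner in that its three hypotheses map one-to-one onto the three matrix facts it needs. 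Two trivial slips: the virtual-network synchronization result you invoke is Theorem 3.5, not 3.4 (Proposition 3.4 is the star network), and, exactly as in the paper's own display, your off-diagonal block pattern tacitly assumes $N\geqslant 3$.
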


\begin{proof}
	The virtual network of (\ref{22a*}) can be constructed as
	\begin{equation}\label{23*}
		\dot{\pmb{y}}_i=\pmb{f}\left(\pmb{y}_i\right)+A(\pmb{y}_{i-1}-\pmb{y}_{i}), \text{  } i =1,\cdots
		,N.
	\end{equation}
	It is easy to check Eq.~(\ref{23*}) by replacing \(\pmb{y}_j\) with \(\pmb{x}_j\) for all \(j\). Write (\ref{23*}) in the form of (\ref{21}) by concatenating \(\pmb{y}_{i}\). 
	The symmetric part of the Jacobian of \(\pmb{\Psi}\) is
	\begin{widetext}
		\begin{small}
			\begin{equation*}
				\begin{split}
					(\frac{\partial \pmb{\Psi}}{\partial {\pmb{Y}}})_{s}=&\left(\begin{array}{cccc}
						{J_{\pmb{f}1}}_s & & & \\
						& {J_{\pmb{f}2}}_s & & \\
						& & \ddots & \\
						& &  & {J_{\pmb{f}N}}_s
					\end{array}\right)
					-\left(\begin{array}{cccc}
						A_{s} & -\frac{A^{T}}{2} & & -\frac{A}{2} \\
						-\frac{A}{2}  & A_{s} & \ddots &  \\
						& \ddots & \ddots & -\frac{A^{T}}{2}\\
						-\frac{A^{T}}{2}  &  & -\frac{A}{2} & A_{s}
					\end{array}\right)
					=\left(\begin{array}{cccc}
						{J_{\pmb{f}1}}_s+A_{s} &  &  & \\
						& {J_{\pmb{f}2}}_s+A_{s} &  &  \\
						&  & \ddots & \\
						&  &  & {J_{\pmb{f}N}}_s+A_{s}
					\end{array}\right)
					-\left(\begin{array}{cccc}
						2A_{s} & -\frac{A^{T}}{2} & & -\frac{A}{2} \\
						-\frac{A}{2}  & 2A_{s} & \ddots &  \\
						& \ddots & \ddots & -\frac{A^{T}}{2}\\
						-\frac{A^{T}}{2}  &  & -\frac{A}{2} & 2A_{s}
					\end{array}\right),
				\end{split}
			\end{equation*}
		\end{small}
	\end{widetext}
	
	\noindent where \(J_{\pmb{f}i}=\frac{\partial \pmb{f}(\pmb{y}_i)}{\partial {\pmb{y}_i}}\). By Theorem \ref{thm3.2}, we need to prove that \((\frac{\partial \pmb{\Psi}}{\partial {\pmb{Y}}})_{s}\) is uniformly negative definite under the given conditions. For an arbitrary \(\pmb{Y}\), since \({J_{\pmb{f}}}_s+A_{s}\prec 0\) on \(D\), we have \({J_{\pmb{f}i}}_{s}+A_{s} \prec 0\) for all \(i\), and thus the first part in \((\frac{\partial \pmb{\Psi}}{\partial {\pmb{Y}}})_{s}\) is uniformly negative definite.
	
	Define block matrices:
	
	\begin{equation*}
		\begin{split}
			\tilde{A}_1=&\left(\begin{array}{cc}
				A_{s} & -\frac{A^{T}}{2} \\
				-\frac{A}{2} & A_{s}  \\
			\end{array}\right), \\
			\tilde{A}_2=&\left(\begin{array}{cc}
				A_{s} & -\frac{A}{2} \\
				-\frac{A^{T}}{2} & A_{s}  \\
			\end{array}\right).
		\end{split}
	\end{equation*}
	
	Observe that the second part in \((\frac{\partial \pmb{\Psi}}{\partial {\pmb{Y}}})_{s}\) can be written as $\sum_{i=1}^{N-1} {\tilde{A}_1}^{i,i+1}+{\tilde{A}_2}^{1,N}$. Since we have assumed that \(A_{s} \succ 0\) and \(4A_{s}-A A_{s}^{-1} A^{T} \succeq 0\), Schur complement lemma implies that both \(\tilde{A}_1\) and \(\tilde{A}_2\) are positive semi-definite\cite{gallier}. Then, using Lemma \ref{lem3.1}, we have \({\tilde{A}_1}^{i,i+1} \succeq 0\) for \(i\) from 1 to \(N-1\) and \({\tilde{A}_2}^{1,N} \succeq 0\), and thus the second part in \((\frac{\partial \pmb{\Psi}}{\partial {\pmb{Y}}})_{s}\) is positive semi-definite. Therefore, \((\frac{\partial \pmb{\Psi}}{\partial {\pmb{Y}}})_{s}\) is negative definite, and it follows that (\ref{22a*}) is completely synhcronized.
\end{proof}

\begin{theorem}\label{thm3.4}
	The bidirectional ring network (\ref{22b*}) is completely synhcronized if \({J_{\pmb{f}}}_s \prec 0\) and for each \(i\) from 1 to \(N\) at least one of the following two conditions is satisfied
	
	1. \({A_{i+1,i}}_s \succ 0\) and \({A_{i,i+1}}_s-\frac{1}{4}(A_{i,i+1}+A_{i+1,i}^T){A^{-1}_{i+1,i}}_s(A_{i+1,i}+A_{i,i+1}^T) \succeq 0\);
	
	2. \({A_{i,i+1}}_s \succ 0\) and \({A_{i+1,i}}_s-\frac{1}{4}(A_{i+1,i}+A_{i,i+1}^T){A^{-1}_{i,i+1}}_s(A_{i,i+1}+A_{i+1,i}^T) \succeq 0\).
\end{theorem}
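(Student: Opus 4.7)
The plan is to transcribe the argument of Theorem 3.6 to the bidirectional setting. First I would construct the virtual network by promoting each $\pmb{x}_i$ to a virtual copy $\pmb{y}_i$,
\begin{equation*}
\dot{\pmb{y}}_i=\pmb{f}(\pmb{y}_i)+A_{i+1,i}(\pmb{y}_{i+1}-\pmb{y}_i)+A_{i-1,i}(\pmb{y}_{i-1}-\pmb{y}_i), \quad i=1,\ldots,N,
\end{equation*}
with all indices taken modulo $N$, concatenate the $\pmb{y}_i$ into $\pmb{Y}\in\mathbb{R}^{nN}$, and note that substituting $\pmb{y}_j\mapsto\pmb{x}_j$ recovers (\ref{22b*}). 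By Theorem 3.5 it then suffices to show that the symmetric part of $\partial\pmb{\Psi}/\partial\pmb{Y}$ is uniformly negative definite.

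Next I would compute this Jacobian and write it as $\mathrm{diag}(J_{\pmb{f}i})-L$, where $L$ is the block ``Laplacian'' with diagonal blocks $L_{ii}=A_{i+1,i}+A_{i-1,i}$ and off-diagonal blocks $L_{i,i+1}=-A_{i+1,i}$, $L_{i,i-1}=-A_{i-1,i}$. The block-diagonal piece $\mathrm{diag}((J_{\pmb{f}i})_s)$ is already uniformly negative definite by the hypothesis $(J_{\pmb{f}})_s\prec 0$, so the task reduces to proving $L_s\succeq 0$, where $L_s$ has diagonal blocks $(A_{i+1,i})_s+(A_{i-1,i})_s$ and $(i,i+1)$-block $-\tfrac{1}{2}(A_{i+1,i}+A_{i,i+1}^T)$.

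The key combinatorial step is an edgewise decomposition: for each $i$ define
\begin{equation*}
\tilde{A}_i=\begin{pmatrix} (A_{i+1,i})_s & -\tfrac{1}{2}(A_{i+1,i}+A_{i,i+1}^T) \\ -\tfrac{1}{2}(A_{i,i+1}+A_{i+1,i}^T) & (A_{i,i+1})_s \end{pmatrix},
\end{equation*}
and embed it in rows/columns $(i,i+1)$ via the pattern of Lemma 3.8 to form $\tilde{A}_i^{\,i,i+1}$. A direct check shows $L_s=\sum_{i=1}^N\tilde{A}_i^{\,i,i+1}$; indeed the diagonal block at position $i$ collects the $(1,1)$-block of $\tilde{A}_i$ and the $(2,2)$-block of $\tilde{A}_{i-1}$, while each cross-block $-\tfrac{1}{2}(A_{i+1,i}+A_{i,i+1}^T)$ is contributed by the single edge $\{i,i+1\}$. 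Conditions 1 and 2 of the theorem are precisely the Schur complements of $\tilde{A}_i$ with respect to its $(1,1)$- and $(2,2)$-blocks, so Lemma 3.7 yields $\tilde{A}_i\succeq 0$ whenever either condition holds, and Lemma 3.8 then lifts this to $\tilde{A}_i^{\,i,i+1}\succeq 0$. Summing gives $L_s\succeq 0$, hence $(\partial\pmb{\Psi}/\partial\pmb{Y})_s\prec 0$, and Theorem 3.5 delivers complete synchronization.

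The main obstacle I anticipate is the bookkeeping for the edge decomposition $L_s=\sum\tilde{A}_i^{\,i,i+1}$, i.e.\ verifying that each diagonal block splits correctly between the two adjacent edges and that the off-diagonal of $\tilde{A}_i$ exactly reproduces $(L_s)_{i,i+1}$ (including the cyclic wrap-around edge $\{N,1\}$); once that is done the Schur-complement step is essentially a verbatim copy of the one in Theorem 3.6.
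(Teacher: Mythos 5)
Your proposal is correct and follows essentially the same route as the paper: construct the virtual ring network, reduce to showing the symmetric part of the block Jacobian is uniformly negative definite, decompose the coupling part edge-by-edge into embedded $2\times2$ block matrices (your $\tilde{A}_i$ is exactly the paper's $\tilde{A}_{i+1,i}$), and apply Lemma 3.7 (Schur complement) together with Lemma 3.8 (embedding preserves positive semi-definiteness). Your write-up actually makes the edgewise bookkeeping, which the paper only asserts, fully explicit, and it checks out including the wrap-around edge.
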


\begin{proof}
	The virtual network of (\ref{22b*}) can be constructed as
	\begin{equation}\label{24*}
		\dot{\pmb{y}}_i=\pmb{f}\left(\pmb{y}_i\right)+A_{i+1,i}(\pmb{y}_{i+1}-\pmb{y}_{i})+A_{i-1,i}(\pmb{y}_{i-1}-\pmb{y}_{i}), \text{  } i =1,\cdots,N.
	\end{equation}
	
	Define block matrices:
	
	\begin{equation*}
		\tilde{A}_{j,i}=\left(\begin{array}{cc}
			{A_{j,i}}_{s} & -\frac{(A_{j,i}+A_{i,j}^T) }{2}\\
			-\frac{(A_{i,j}+A_{j,i}^T)}{2} & {A_{i,j}}_{s}  \\
		\end{array}\right), 
	\end{equation*}
	
	Writing the virtual network (\ref{24*}) in the form of virtual system (\ref{21}), the symmetric part of \(\frac{\partial \pmb{\Psi}}{\partial {\pmb{Y}}}\) is \(D_{{J_{\pmb{f}}}_s}-\sum_{i=1}^{N-1} {\tilde{A}_{i+1,i}}^{i,i+1}-{\tilde{A}_{N,1}}^{1,N}\), where \(D_{{J_{\pmb{f}}}_s}\) is a block diagonal matrix whoes the main-diagonal blocks are \({J_{\pmb{f}i}}_{s}\). The fact that \({J_{\pmb{f}}}_s\) is negative definite guarantees that \(D_{{J_{\pmb{f}}}_s} \prec 0\). By Schur complement lemma, for each \(i\) from \(1\) to \(N\), when \({A_{i+1,i}}_s\) is invertable, \(\tilde{A}_{i+1,i} \succeq 0\) and \(\tilde{A}_{i,i+1} \succeq 0\) iff \({A_{i+1,i}}_s \succ 0\) and \({A_{i,i+1}}_s-\frac{1}{4}(A_{i,i+1}+A_{i+1,i}^T){A^{-1}_{i+1,i}}_s(A_{i+1,i}+A_{i,i+1}^T) \succeq 0\); while when \({A_{i,i+1}}_s\) is invertable, \(\tilde{A}_{i+1,i} \succeq 0\) and \(\tilde{A}_{i,i+1} \succeq 0\) iff \({A_{i,i+1}}_s \succ 0\) and \({A_{i+1,i}}_s-\frac{1}{4}(A_{i+1,i}+A_{i,i+1}^T){A^{-1}_{i,i+1}}_s(A_{i,i+1}+A_{i+1,i}^T) \succeq 0\). Thus, we have \(\tilde{A}_{i+1,i} \succeq 0\) for \(i\) from \(1\) to \(N-1\) and \(\tilde{A}_{N,1} \succeq 0\). As a consequence of Lemma \ref{lem3.1}, \({\tilde{A}_{i+1,i}}^{i,i+1} \succeq 0\) for \(i\) from \(1\) to \(N-1\) and \({\tilde{A}_{N,1}}^{1,N} \succeq 0\). Then \((\frac{\partial \pmb{\Psi}}{\partial {\pmb{Y}}})_{s} \prec 0\). Hence (\ref{22b*}) is completely synchronized.
\end{proof}

The following two corollaries can be drawn from Theorems \ref{thm3.3} and \ref{thm3.4}:

\begin{corollary}\label{cor3.1}
	The unidirectional ring network (\ref{22a*}) with identical symmetric coupling matrix \(A\) is completely synhcronized if \({J_{\pmb{f}}}_s+A\prec 0\) on \(D\) and \(A \succ 0\);
\end{corollary}

\begin{corollary}\label{cor3.2}
	The bidirectional ring network (\ref{22b*}) with interactional couplings, \(A_{i,j}=A_{j,i}\), is completely synhcronized if \({J_{\pmb{f}}}_s \prec 0\) and \({A_{i,i+1}}_s \succ 0\) for all \(i\) from 1 to \(N\).
\end{corollary}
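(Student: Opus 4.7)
The plan is to derive this corollary as a direct specialization of Theorem 3.9 to the interactional case $A_{i,j}=A_{j,i}$. The hypothesis of Theorem 3.9 requires ${J_{\pmb{f}}}_s\prec 0$, which is already assumed, and for each $i$ we must verify that at least one of the two alternative conditions is met. I will check the second condition (condition 1 works equally well by symmetry once the substitution is made).

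Substituting $A_{i+1,i}=A_{i,i+1}$ immediately gives ${A_{i+1,i}}_s={A_{i,i+1}}_s$, and also
\[
A_{i+1,i}+A_{i,i+1}^T \;=\; A_{i,i+1}+A_{i,i+1}^T \;=\; 2{A_{i,i+1}}_s,
\]
with the same identity for $A_{i,i+1}+A_{i+1,i}^T$. Plugging these into the quadratic expression in condition 2 of Theorem 3.9 causes the correction term to collapse:
\[
{A_{i+1,i}}_s-\tfrac{1}{4}(A_{i+1,i}+A_{i,i+1}^T){A_{i,i+1}}_s^{-1}(A_{i,i+1}+A_{i+1,i}^T) \;=\; {A_{i,i+1}}_s-{A_{i,i+1}}_s \;=\; 0,
\]
which is trivially positive semi-definite. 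The remaining requirement of condition 2 is exactly ${A_{i,i+1}}_s\succ 0$, which is assumed in the corollary and also guarantees the invertibility needed to apply Theorem 3.9.

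Since both clauses of condition 2 hold for every $i$ from $1$ to $N$, Theorem 3.9 applies and yields complete synchronization of the network (\ref{22b*}). I expect no substantive obstacle in this argument; the only subtlety worth noting is the distinction between the coupling matrices being \emph{interactional} (the map from oscillator $i$ to $j$ equals the map from $j$ to $i$ as matrices) and being \emph{symmetric} as matrices. The former is what is assumed, and it is precisely what is needed to make the off-diagonal blocks $A_{i,i+1}+A_{i+1,i}^T$ reduce cleanly to $2{A_{i,i+1}}_s$ so that the Schur-complement expression vanishes.
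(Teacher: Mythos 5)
Your proposal is correct and matches the paper's own proof essentially verbatim: both substitute $A_{i+1,i}=A_{i,i+1}$ into condition 2 of the bidirectional-ring theorem, observe that the Schur-complement term collapses to ${A_{i,i+1}}_s-{A_{i,i+1}}_s{A_{i,i+1}}_s^{-1}{A_{i,i+1}}_s=0\succeq 0$, and use ${A_{i,i+1}}_s\succ 0$ for both the remaining hypothesis and the required invertibility. No gaps.
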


We've revovered the results obtained in Ref. \onlinecite{wang}.

\section{Networks of the (Duffing/Rayleigh) van der Pol Systems and Numerical Simulations}
\subsection{Van der Pol Oscillators}
We first consider a network of \(N\) all-to-all coupled identical vdP oscillators with symmetric linear couplings. In the absence of coupling, a single vdP oscillator has the following equation of motion

\begin{equation} \label{31*}
	\ddot{x}+\left(\alpha x^{2}-\gamma\right) \dot{x}+\omega^{2}x=0.
\end{equation}

It is usually written in two different two-dimensional forms, one by setting \(y=\dot{x}\) and the other by setting \(y=\frac{1}{\omega}(\dot{x}+\frac{\alpha}{3}x^3-\gamma x)\). We use the latter because, in this case, the Jacobian \(J_{\pmb{f}}\) has a globally negative semi-definite symmetric part if \(\alpha \geqslant 0\) and \(\gamma \leqslant 0\). Consider a network of \(N\) all-to-all coupled identical vdP oscillators with symmetric linear couplings described by Eq.~(\ref{5*}). Write 
\begin{equation} \label{sum}
	\sum_{j=1}^{N} {A_j}_s = \left(\begin{array}{cc}
		A^{11}_{ts} & A^{12}_{ts} \\
		A^{21}_{ts} & A^{22}_{ts} \\
	\end{array}\right)
\end{equation}
with \(A^{12}_{ts}=A^{21}_{ts}\). 
We can then establish the following theorem.

\begin{theorem}\label{thm4.1}
	For a network of \(N\) all-to-all coupled identical vdP oscillators with symmetric linear couplings described by Eq.~(\ref{5*}) in which \(\pmb{f}\) is defined by
	\begin{equation} \label{32*}
		\pmb{f}(x,y)=\left(\begin{array}{c}\omega y-\frac{\alpha}{3}x^3+\gamma x 	\\-\omega x\end{array}\right),
	\end{equation}
	if \(\alpha \geqslant 0\), \(\gamma < A^{11}_{ts}-\frac{(A^{12}_{ts})^2}{A^{22}_{ts}}\), and \(A^{22}_{ts} > 0\), then the network is completely synchronized.
\end{theorem}

\begin{proof}
	According to Proposition \ref{prop3.1}, the network is completely synchronized if
	
	\begin{equation*}
		{J_{\pmb{f}}}_s-\sum_{j=1}^{N} {A_j}_s= \left(\begin{array}{cc}
			\gamma-\alpha x^2-A^{11}_{ts} & -A^{12}_{ts} \\
			-A^{21}_{ts} & -A^{22}_{ts} \\
		\end{array}\right)
	\end{equation*}
	
	\noindent is uniformly negative definite on the entire state space \(\mathfrak{D}\), or, equivalently, \(-({J_{\pmb{f}}}_s-\sum_{j=1}^{N} {A_j}_s) \succ 0\) on \(\mathfrak{D}\). Since \(\gamma < A^{11}_{ts}-\frac{(A^{12}_{ts})^2}{A^{22}_{ts}}\), \(\alpha \geqslant 0\) and \(A^{22}_{ts} > 0\), we have
	
	\begin{equation*}
		-(\gamma-\alpha x^2-A^{11}_{ts}) \geqslant -(\gamma-A^{11}_{ts}) \geqslant -(\gamma-A^{11}_{ts}+\frac{(A^{12}_{ts})^2}{A^{22}_{ts}}) > 0,
	\end{equation*}
	
	\noindent and
	
	\begin{equation*}
		-(\gamma-A^{11}_{ts})A^{22}_{ts}-(A^{12}_{ts})^2 > 0.
	\end{equation*}
	
	\noindent Consequently,
	
	\begin{equation*}
		- (-\gamma+\alpha x^2+A^{11}_{ts})^{-1} \geqslant - (-\gamma+A^{11}_{ts})^{-1},
	\end{equation*}
	
	\noindent and
	
	\begin{equation*}
		\begin{split}
			0< &A^{22}_{ts}- (-\gamma+A^{11}_{ts})^{-1} (A^{12}_{ts})^2 \\
			\leqslant &A^{22}_{ts}- (-\gamma+\alpha x^2+A^{11}_{ts})^{-1} (A^{12}_{ts})^2 \\
			= &A^{22}_{ts}- A^{21}_{ts}(-\gamma+\alpha x^2+A^{11}_{ts})^{-1} A^{12}_{ts}.
		\end{split}
	\end{equation*}
	
	\noindent Schur complement lemma then implies \(-({J_{\pmb{f}}}_s-\sum_{j=1}^{N} {A_j}_s) \succ 0\).
\end{proof}

Next, we consider \(N\) identical vdP oscillators coupled by a star network with symmetric linear couplings described by Eq.~(\ref{8}) where the sum of coupling matrices is written as Eq.~(\ref{sum}) and \({A_1}_s\) is given by 

\begin{equation*}
	{A_1}_s = \left(\begin{array}{cc}
		A^{11}_{1s} & A^{12}_{1s} \\
		A^{21}_{1s} & A^{22}_{1s} \\
	\end{array}\right)
\end{equation*}
with \(A^{12}_{1s}=A^{21}_{1s}\). Then using Proposition \ref{prop3.2} and following similar arguments as in the proof of Theorem \ref{thm4.1}, we have

\begin{theorem}\label{thm4.2}
	For \(N\) identical vdP oscillators coupled by a star network with symmetric linear couplings described by Eq.~(\ref{8}) in which \(\pmb{f}\) is given by Eq.~(\ref{32*}), if \(\alpha \geqslant 0\), \(\gamma < \text{min} \{A^{11}_{1s}-\frac{(A^{12}_{1s})^2}{A^{22}_{1s}},A^{11}_{ts}-\frac{(A^{12}_{ts})^2}{A^{22}_{ts}}\}\), and \(\text{min} \{A^{22}_{1s},A^{22}_{ts}\} > 0\), then the network is completely synchronized.
\end{theorem}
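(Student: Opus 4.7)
The plan is to invoke Proposition 3.4, whose hypotheses require two uniform negative-definiteness statements: (i) ${J_{\pmb{f}}}_s - \sum_{j=1}^{N}{A_j}_s \prec 0$ along the trajectories of $\pmb{x}_1$ and $\pmb{x}_2$, and (ii) ${J_{\pmb{f}}}_s - {A_1}_s \prec 0$ along the trajectories of $\pmb{x}_i$ for $i = 2,\dots,N$. My strategy is to upgrade both to uniform statements on the entire phase plane $D = \mathbb{R}^2$, so the question of whether the trajectories actually stay in the contraction region never arises.

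For (i), the $2\times 2$ symmetric matrix $-\!\left({J_{\pmb{f}}}_s - \sum_{j}{A_j}_s\right)$ is exactly the matrix that appears in the proof of Theorem 4.1, with entries expressed via $A^{11}_{ts}, A^{12}_{ts}, A^{22}_{ts}$. So the argument from Theorem 4.1 transfers verbatim: the $(2,2)$ entry $A^{22}_{ts}$ is positive by hypothesis, and the Schur complement (Lemma 3.7) is positive because $\gamma < A^{11}_{ts} - (A^{12}_{ts})^2/A^{22}_{ts}$ together with $\alpha \geqslant 0$ makes $-(\gamma - \alpha x^2 - A^{11}_{ts}) \geqslant -(\gamma - A^{11}_{ts}) > (A^{12}_{ts})^2/A^{22}_{ts}$ uniformly in $x$. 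For (ii), the computation is structurally identical, with $A^{11}_{1s}, A^{12}_{1s}, A^{22}_{1s}$ replacing the totals: the $(2,2)$ entry is $A^{22}_{1s}$ and the Schur complement positivity reduces to $\gamma < A^{11}_{1s} - (A^{12}_{1s})^2/A^{22}_{1s}$, again uniformly in $x$ since $\alpha x^2$ only strengthens the inequality on the $(1,1)$ entry.

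Taking the minimum of the two thresholds, precisely as in the hypotheses $\gamma < \min\{A^{11}_{1s}-(A^{12}_{1s})^2/A^{22}_{1s},\, A^{11}_{ts}-(A^{12}_{ts})^2/A^{22}_{ts}\}$ and $\min\{A^{22}_{1s},A^{22}_{ts}\} > 0$, forces (i) and (ii) to hold simultaneously on all of $D$; Proposition 3.4 then yields complete synchronization. There is no real obstacle here — the entire argument is bookkeeping on two parallel instances of the Schur-complement computation carried out in Theorem 4.1, and the monotonicity in $x$ provided by $\alpha \geqslant 0$ means it suffices to check both inequalities at $x = 0$.
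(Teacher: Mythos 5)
Your proposal is correct and matches the paper's intent exactly: the paper gives no separate proof of this theorem, stating only that it follows "using Proposition 3.4 and following similar arguments as in the proof of Theorem 4.1," which is precisely the two parallel Schur-complement computations (one for $\sum_j {A_j}_s$, one for ${A_1}_s$) that you carry out. The only cosmetic difference is that Theorem 4.1's proof takes the Schur complement with respect to the $(1,1)$ block while you use the $(2,2)$ block, which Lemma 3.7 permits either way.
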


\begin{example}
	Consider a star network of six coupled identical vdP oscillators described by Eqs.~(\ref{8}) and (\ref{32*}) with \(\alpha = 1\), \(\omega = 1\), and \(\gamma  = 5\), where the coupling matrices are given by
	
	\begin{align*}
		& A_1=\left(\begin{array}{cc}
			8 & 1 \\
			3 & 4 \\
		\end{array}\right), \;
		A_2=\left(\begin{array}{cc}
			3 & 3 \\
			4 & -5 \\
		\end{array}\right), \;
		A_3=\left(\begin{array}{cc}
			7 & -2 \\
			-5 & -2 \\
		\end{array}\right), \\
		& A_4=\left(\begin{array}{cc}
			1 & 2 \\
			4 & 2 \\
		\end{array}\right), \;
		A_5=\left(\begin{array}{cc}
			8 & -3 \\
			1 & 10 \\
		\end{array}\right), \;
		A_6=\left(\begin{array}{cc}
			-3 & -4 \\
			2 & -6 \\
		\end{array}\right).
	\end{align*} 
	
	\noindent In this case, \( \text{min} \{A^{22}_{1s},A^{22}_{ts}\}=\text{min}\{4,3\}=3>0\), and
	
	\begin{equation*}
		5=\gamma < \text{min} \{A^{11}_{1s}-\frac{(A^{12}_{1s})^2}{A^{22}_{1s}},A^{11}_{ts}-\frac{(A^{12}_{ts})^2}{A^{22}_{ts}}\} =\text{min}\{7,21\}=7.
	\end{equation*}
	
	\noindent By Theorem \ref{thm4.2}, the network is completely synchronized.
	
	Figure \ref{fig4.1} shows the complete synchronization phenomenon of this network. It can be seen that all six oscillators converge into synchronous \(x\) and \(y\) states very quickly, as expected.
	
	\begin{figure}[htbp]
	\centering
	\begin{subfigure}{.25\textwidth}
		\centering
		\includegraphics[width=.8\linewidth]{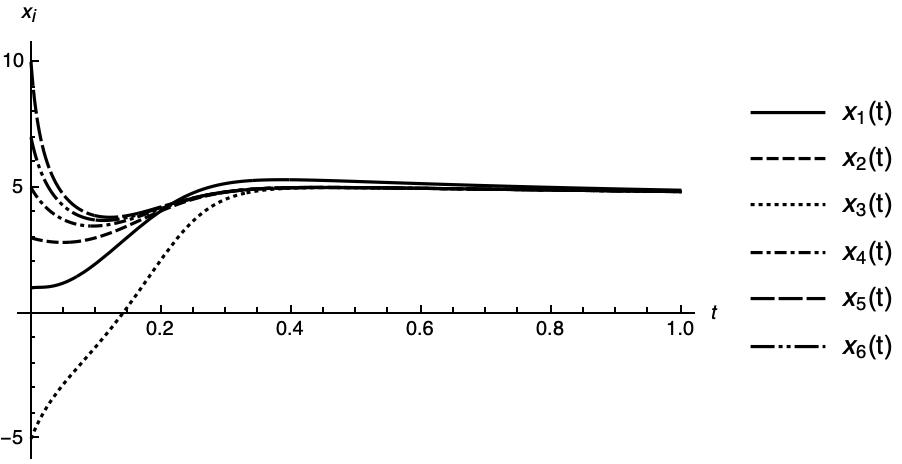}
		\caption{The time series of \(x_i\) for \(i\) from 1 to 6}
		\label{fig4.1:sub1}
	\end{subfigure}%
	\begin{subfigure}{.25\textwidth}
		\centering
		\includegraphics[width=.8\linewidth]{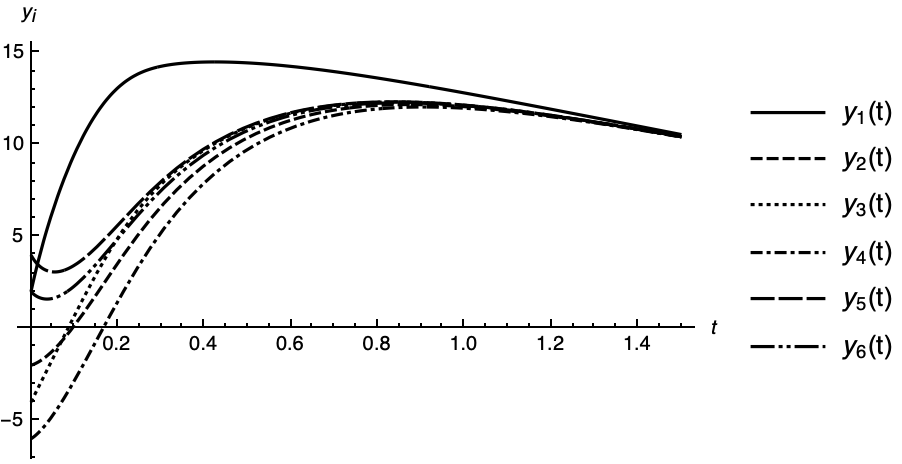}
		\caption{The time series of \(y_i\) for \(i\) from 1 to 6}
		\label{fig4.1:sub2}
	\end{subfigure}
	\caption{The time series of self-sustained six identical vdP oscillators coupled by a star network with symmetric linear couplings}
	\label{fig4.1}
\end{figure}

\end{example}

\subsection{Duffing van der Pol Oscillators}
We now introduce a Duffing term, \(x^3\), into the vdP equation (\ref{31*}) so that the euqation becomes

\begin{equation} \label{33*}
	\ddot{x}+\left(\alpha x^{2}-\gamma\right) \dot{x}+(\omega^{2}+\varepsilon x^{2})x=0.
\end{equation}

\noindent This is the equation of motion of a DvdP oscillator. Suppose \(\alpha \geqslant 0\), \(-(\omega^2+1)<\gamma<-1\), \(\omega \neq 0\), and \(\varepsilon=\frac{\alpha}{3}\). Let \(\gamma^{\prime}=\gamma+1\). Since \(-(\omega^2+1)<\gamma<-1\), we have \(-\omega^2<\gamma^{\prime}<0\). We can therefore let \({\omega^{\prime}}^2=\omega^2+\gamma^{\prime}>0\). Equation (\ref{33*}) can thus be written as

\begin{gather*}
	\ddot{x}+\left(\alpha x^{2}-\gamma^{\prime}+1\right) \dot{x}+({\omega^{\prime}}^2-\gamma^{\prime}+\frac{\alpha}{3} x^{2})x=0 \\
	\frac{d}{dt}(\dot{x}+\frac{\alpha}{3} x^3-\gamma^{\prime} x)=\omega^{\prime}[-\omega^{\prime} x-\frac{1}{\omega^{\prime}} (\dot{x}+\frac{\alpha}{3} x^3-\gamma^{\prime} x)]
\end{gather*}

Making a transformation \(y=\frac{1}{\omega^{\prime}}(\dot{x}+\frac{\alpha}{3} x^3-\gamma^{\prime} x)\), we get a two dimensional system

\begin{equation}\label{34*}
	\left(\begin{array}{l}\dot{x} \\\dot{y}\end{array}\right)=\pmb{f}(x,y)=\left(\begin{array}{c}\omega^{\prime} y-\frac{\alpha}{3}x^3+\gamma^{\prime} x 	\\-\omega^{\prime} x-y\end{array}\right).
\end{equation}

\noindent with

\begin{equation*}
	{J_{\pmb{f}}}_s=\left(\begin{array}{cc}
		\gamma^{\prime}-\alpha x^2 &0 \\
		0 & -1 \\
	\end{array}\right).
\end{equation*}

\noindent It is obvious that \({J_{\pmb{f}}}_s \prec 0\) as \(\alpha \geqslant 0\) and  \(\gamma^{\prime}<0\). Using Corollary \ref{cor3.2}, we conclude the following theorem

\begin{theorem}\label{thm4.3}
	For a bidirectional ring network of \(N\) identical coupled DvdP oscillators with interactional linear couplings described by Eq.~(\ref{22b*}) in which \(\pmb{f}\) is given by Eq.~(\ref{34*}), if \(\alpha \geqslant 0\), \(\gamma^{\prime}<0\), and \({A_{i,i+1}}_{s}\succ 0\) for all \(i\), then the network is completely synchronized.
\end{theorem}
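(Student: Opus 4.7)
The plan is to reduce the statement to an immediate application of Corollary 3.11. That corollary requires exactly two things of the bidirectional ring system (\ref{22b*}): that the symmetric part of the Jacobian of the node dynamics be uniformly negative definite on the state space, and that each coupling's symmetric part ${A_{i,i+1}}_s$ be positive definite. The latter is given by hypothesis (and I am reading ``balanced'' here in the sense of the interactional condition $A_{i,j}=A_{j,i}$ used in Corollary 3.11), so the entire proof reduces to checking the Jacobian condition for the specific Duffing van der Pol vector field in Eq. (\ref{34*}).

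For that check I would simply quote the computation done in the paragraph immediately preceding the theorem: differentiating $\pmb{f}(x,y)=(\omega' y-\tfrac{\alpha}{3}x^3+\gamma' x,\,-\omega' x-y)^{T}$ and symmetrizing gives
\[
{J_{\pmb{f}}}_{s}=\begin{pmatrix} \gamma'-\alpha x^{2} & 0 \\ 0 & -1 \end{pmatrix},
\]
which is diagonal because the off-diagonal entries $\omega'$ and $-\omega'$ cancel in the symmetrization. Under the hypotheses $\alpha\geqslant 0$ and $\gamma'<0$, the two diagonal entries satisfy $\gamma'-\alpha x^{2}\leqslant\gamma'<0$ and $-1<0$, so both eigenvalues are bounded away from zero uniformly on $D=\mathbb{R}^{2}$. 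Hence ${J_{\pmb{f}}}_{s}\prec 0$ uniformly, Corollary 3.11 applies, and complete synchronization follows regardless of initial conditions.

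I do not expect any real obstacle, because the theorem is essentially a specialization of Corollary 3.11. The one point worth emphasizing is that the global negative definiteness of ${J_{\pmb{f}}}_{s}$ depends crucially on the coordinate choice $y=\tfrac{1}{\omega'}(\dot x+\tfrac{\alpha}{3}x^{3}-\gamma' x)$ together with the matched parameter choice $\varepsilon=\tfrac{\alpha}{3}$: these together absorb the cubic nonlinearity into the off-diagonal part of $J_{\pmb{f}}$, where it is killed by symmetrization, and they also produce the favorable $-1$ in the $(2,2)$ slot. With the textbook substitution $y=\dot x$ the symmetric part would acquire an $x$-dependent off-diagonal entry and could not be shown to be negative definite on all of $\mathbb{R}^{2}$; so the genuine content of the theorem lies in the coordinate setup preceding it, after which the synchronization conclusion is a one-line corollary.
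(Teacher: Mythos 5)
Your proof is correct and is essentially identical to the paper's: the paper derives \({J_{\pmb{f}}}_s=\mathrm{diag}(\gamma'-\alpha x^2,\,-1)\) in the paragraph preceding the theorem and then simply states ``Using Corollary 3.11, we conclude the following theorem,'' which is exactly your reduction (including reading ``balanced'' as the interactional condition \(A_{i,j}=A_{j,i}\)). Your closing remark that the content lives in the coordinate change \(y=\tfrac{1}{\omega'}(\dot x+\tfrac{\alpha}{3}x^3-\gamma' x)\) with \(\varepsilon=\tfrac{\alpha}{3}\) is also the paper's implicit point.
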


\subsection{Rayleigh van der Pol Oscillators}
Next, we introduce a Rayleigh term, \(\dot{x}^3\), into the vdP equation (\ref{31*}) to get a RvdP oscillator
\begin{equation} \label{35*}
	\ddot{x}+\left(\alpha x^{2}+\beta \dot{x}^2-\gamma\right) \dot{x}+\omega^{2}x=0.
\end{equation}

Letting \(y=\dot{x}\), its two-dimensional form can be written as

\begin{equation}\label{36*}
	\left(\begin{array}{l}\dot{x} \\\dot{y}\end{array}\right)=\pmb{f}(x,y)=\left(\begin{array}{c} y \\-\left(\alpha x^{2}+\beta y^2-\gamma\right) y-\omega^{2}x\end{array}\right).
\end{equation}

We assume all-to-all symmetric coupling as described in Eq.~(\ref{5*}). We also assume \textit{full-state} linear coupling, i.e., the coupling matrices \(A_{i}\) are diagonal for all \(i\). If \(x_{i}\) and \(y_{i}\) represent the ``position" and ``velocity" of \(i^{\text{th}}\) node, respectively, then full-state coupling means that each node adjusts its position and velocity according to the difference between its position and velocity with other nodes, respectively\cite{alderisio}. We can write

\begin{equation}\label{37*}
	{J_{\pmb{f}}}_s-\sum_{j=1}^{N} {A_{j}}_s=\left(\begin{array}{cc}
		-\frac{c}{N}\sum_{j=1}^{N} a_{j} &\frac{1-2\alpha xy-\omega^2}{2} \\
		\frac{1-2\alpha xy-\omega^2}{2} & -(\alpha x^2+3\beta y^2-\gamma)-\frac{c}{N}\sum_{j=1}^{N} b_{j} \\
	\end{array}\right).
\end{equation}

\noindent where \(c\) is a constant coupling strength, and \(a_{j}\) and \(b_{j}\) are coefficients that represent the coupling effect of \(j^{\text{th}}\) node on others. In this case, \({J_{\pmb{f}}}_s\) is neither negative definite nor negative semi-definite. Furthermore, the definiteness of \({J_{\pmb{f}}}_s-\sum_{j=1}^{N} {A_{j}}_s\) cannot be determined globally since it depends on the position \((x,y)\) in the virtual phase plane. We can nevertheless find the contraction region of the virtual system of this network:

\begin{lemma}\label{lem4.1}
	For a network of \(N\) all-to-all coupled identical RvdP oscillators with symmetric and full-state linear couplings described by Eq.~(\ref{5*}) in which \(\pmb{f}\) is given by Eq.~(\ref{36*}), if \(\frac{c}{N}\sum_{j=1}^{N} a_{j}>0\), then the network is completely synchronized if all trajectories, \((x_{k}(t)\text{  }y_{k}(t))^T\), start within the region \(\mathfrak{C}_{RvdP}:=\left\{(x\text{    }y)^T\in \mathbb{R}^2: \frac{c^2}{N^2}\sum_{i, j=1}^{N} a_{i} b_{j}+\mu(x,y) \frac{c}{N}\sum_{j=1}^{N} a_{j}-\nu^2(x,y)>0\right\}\) and ramain there for \(k=1, \cdots, N\), where \(\mu(x,y)\equiv\alpha  x^2+3 \beta  y^2-\gamma\) and \(\nu(x,y)\equiv\frac{1-2 \alpha  x y-\omega^2}{2}\).
\end{lemma}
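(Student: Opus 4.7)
The plan is to apply Proposition 3.3 directly: the all-to-all coupled network (14) admits the virtual system (\ref{7}), and its contraction region is precisely the set where the $2\times 2$ matrix $M := {J_{\pmb{f}}}_s-\sum_{j=1}^{N} {A_j}_s$ shown in Eq.~(\ref{37*}) is uniformly negative definite. So my first step is to read $M$ off from (\ref{37*}), flip sign, and characterise the region where
\[
-M=\begin{pmatrix}
\tfrac{c}{N}\sum_{j=1}^N a_j & -\nu(x,y)\\
-\nu(x,y) & \mu(x,y)+\tfrac{c}{N}\sum_{j=1}^N b_j
\end{pmatrix}
\]
is positive definite, with $\mu$ and $\nu$ as in the statement.

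For the next step I would invoke Lemma 3.7 with the pivot block taken to be the scalar $(1,1)$ entry $\frac{c}{N}\sum_j a_j$, which is positive, hence invertible, by hypothesis. Positive definiteness of $-M$ then reduces via the Schur-complement criterion to the single scalar inequality
\[
\mu(x,y)+\tfrac{c}{N}\sum_{j=1}^N b_j-\nu(x,y)^{2}\Bigl(\tfrac{c}{N}\sum_{j=1}^N a_j\Bigr)^{-1}>0.
\]
Multiplying through by the positive quantity $\frac{c}{N}\sum_j a_j$ and using $\bigl(\sum_i a_i\bigr)\bigl(\sum_j b_j\bigr)=\sum_{i,j}a_i b_j$ rearranges this to
\[
\tfrac{c^2}{N^2}\sum_{i,j=1}^{N} a_i b_j+\mu(x,y)\,\tfrac{c}{N}\sum_{j=1}^{N} a_j-\nu(x,y)^{2}>0,
\]
which is exactly the defining inequality of $C_{RvdP}$. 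Hence $C_{RvdP}$ coincides with the contraction region of the virtual system, and Proposition~3.3 applied to trajectories that start in $C_{RvdP}$ and remain there for all $t\ge 0$ delivers complete synchronization.

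I do not expect a real technical obstacle: the whole argument is the Schur-complement reduction of a $2\times 2$ symmetric matrix to a scalar inequality, with the sign of $\frac{c}{N}\sum_j a_j$ handed to us as an assumption. The genuine subtlety is conceptual rather than computational. Unlike the van der Pol cases in Theorems 4.1 and 4.2, no choice of coefficients makes $-M\succ 0$ hold on all of $\mathbb{R}^2$, because $\mu(x,y)$ and $\nu(x,y)$ are both indefinite in $(x,y)$. Consequently the hypothesis ``all trajectories remain in $C_{RvdP}$'' cannot be dropped and must be established by a separate argument; this is precisely what motivates the trapping-region construction that occupies the remainder of Section~4.
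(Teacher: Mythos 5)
Your proposal is correct and follows essentially the same route as the paper: invoke Proposition 3.3 for the all-to-all virtual system, read the matrix off Eq.~(\ref{37*}), and apply the Schur-complement criterion of Lemma 3.7 with the positive scalar $(1,1)$ entry as pivot, then clear the denominator to obtain the defining inequality of $C_{RvdP}$. The only difference is that you spell out the multiplication by $\frac{c}{N}\sum_j a_j$ and the identity $\bigl(\sum_i a_i\bigr)\bigl(\sum_j b_j\bigr)=\sum_{i,j}a_i b_j$, which the paper leaves implicit.
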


\begin{proof}
	See Appendix \ref{appB:subsec1}.
\end{proof}

To determine whether all trajectories will remain within \(\mathfrak{C}_{RvdP}\), we need to find a trapping region of the virtual system that is contained in \(\mathfrak{C}_{RvdP}\). For simplicity, we make a further assumption that the coupling matrices are identical and symmetric, i.e., \(a_{i}=b_{i}=a\) for all \(i\). The dynamics of this network can be expressed as

\begin{equation}\label{38*}
	\begin{split}
		\dot{x}_{i}&=y_{i}+\frac{c}{N}a\sum_{j=1}^{N} (x_{j}-x_{i}) \\
		\dot{y}_{i}&=-\left(\alpha {x_{i}}^{2}+\beta {y_{i}}^2-\gamma\right) y_{i}-\omega^{2}x_{i}+\frac{c}{N}a\sum_{j=1}^{N} (y_{j}-y_{i}).
	\end{split}
\end{equation}

Using the LaSalle's invariance principle, we can obtain the following lemma.

\begin{lemma}\label{lem4.2}
	For a network of \(N\) all-to-all coupled identical RvdP oscillators with identical full-state coupling described by Eq.~(\ref{38*}), the origin is globally asymptotically stable if \(\alpha\geqslant0\), \(\beta\geqslant0\), \(ca \geqslant 0\), \(\omega \neq 0\), and \(\gamma<0\).
\end{lemma}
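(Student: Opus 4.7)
The plan is to apply LaSalle's invariance principle with a quadratic energy-type Lyapunov function. Define
\begin{equation*}
V(\pmb{x},\pmb{y})=\frac{1}{2}\sum_{i=1}^{N}\bigl(\omega^{2}x_{i}^{2}+y_{i}^{2}\bigr),
\end{equation*}
which is positive definite (since $\omega\neq 0$) and radially unbounded on $\mathbb{R}^{2N}$. So the bulk of the work is to show $\dot{V}\le 0$ and then use LaSalle to identify the largest invariant subset of $\{\dot V=0\}$.

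First I would compute $\dot V$ along (\ref{38*}). The cross terms $\omega^{2}x_{i}y_{i}$ coming from $\omega^{2}x_{i}\dot x_{i}$ cancel against the $-\omega^{2}x_{i}y_{i}$ in $y_{i}\dot y_{i}$, leaving only dissipation and coupling contributions. For the coupling, I would use the standard all-to-all identity
\begin{equation*}
\sum_{i,j}z_{i}(z_{j}-z_{i})=-\tfrac{1}{2}\sum_{i,j}(z_{i}-z_{j})^{2}
\end{equation*}
applied separately to $z=x$ and $z=y$, which turns the coupling contribution into
$-\tfrac{ca\,\omega^{2}}{2N}\sum_{i,j}(x_{i}-x_{j})^{2}-\tfrac{ca}{2N}\sum_{i,j}(y_{i}-y_{j})^{2}$; this is $\le 0$ since $ca\ge 0$. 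The dissipation term is $-\sum_{i}(\alpha x_{i}^{2}y_{i}^{2}+\beta y_{i}^{4}-\gamma y_{i}^{2})$, and each summand is nonnegative because $\alpha\ge 0$, $\beta\ge 0$, and $-\gamma>0$. Combining the two gives $\dot V\le 0$ globally.

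Next I would invoke LaSalle. On the set $E:=\{\dot V=0\}$, the non-negativity of each summand in the dissipation term forces $-\gamma y_{i}^{2}=0$, and since $\gamma<0$ this gives $y_{i}=0$ for every $i$. Restricted to $E$, the second equation of (\ref{38*}) evaluates to $\dot y_{i}=-\omega^{2}x_{i}$, so for a trajectory to remain inside $E$ (i.e., keep $y_{i}\equiv 0$), we need $\omega^{2}x_{i}=0$, and hence $x_{i}=0$ because $\omega\neq 0$. Therefore the largest invariant subset of $E$ is $\{0\}$. Radial unboundedness of $V$ together with LaSalle's principle then yields global asymptotic stability of the origin.

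The calculation is essentially routine; the only mild subtlety I anticipate is the degenerate case $ca=0$, where the coupling terms vanish identically and the argument reduces to $N$ decoupled damped Rayleigh van der Pol oscillators — but the LaSalle step above still identifies $\{0\}$ as the only invariant set, so the conclusion is unaffected. Thus no genuine obstacle arises, and the main care point is simply checking the sign conventions in the dissipation and verifying that $y_{i}\equiv 0$ on the invariant set forces $x_{i}\equiv 0$ through the $\dot y_{i}$ equation.
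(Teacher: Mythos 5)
Your proof is correct and follows essentially the same route as the paper's Appendix B: the same quadratic Lyapunov function (yours is the paper's $V$ scaled by $\omega^{2}$), the same all-to-all coupling identity to make the coupling contribution nonpositive, and the same appeal to LaSalle's invariance principle. If anything, your explicit identification of the largest invariant subset of $\{\dot V=0\}$ (using $\dot y_i=-\omega^{2}x_i$ to force $x_i=0$) is slightly more careful than the paper's terse claim that equality holds only along the zero trajectory.
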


\begin{proof}
	See Appendix \ref{appB:subsec2}.
\end{proof}

As a result, the solution of Eq.~(\ref{38*}) with \(\alpha\geqslant0\), \(\beta\geqslant0\), \(ca \geqslant 0\), \(\omega \neq 0\), and \(\gamma<0\) will eventually converge to the origin regardless of the initial conditions. However, as stated in Definition \ref{def3.1}, usually \(\pmb{0}\) is not treated as a synchronous state. We will show that all overdamped coupled RvdP oscillators can reach complete synchronization before they stop oscillating.

Observe that the virtual system of (\ref{38*}) can be constrcuted as follows:

\begin{equation}\label{39*}
	\left(\begin{array}{l}\dot{x} \\\dot{y}\end{array}\right)=\left(\begin{array}{c} y+\frac{ca}{N}(\sum_{j=1}^{N} x_{j}-Nx) \\-\left(\alpha x^{2}+\beta y^2-\gamma\right) y-\omega^{2}x+\frac{ca}{N}(\sum_{j=1}^{N} y_{j}-Ny)\end{array}\right).
\end{equation}

\begin{lemma}\label{lem4.3}
	The disk of radius \(r\)
	
	\({\mathfrak{R}_{RvdP}}_{\omega^2=1}:=\left\{(x,y) \in \mathbb{R}^2: H(x,y)=x^2+y^2 \leqslant r^2\right\}\) 
	
	\noindent is a trapping region of the virtual system (\ref{39*}) with \(\alpha\geqslant0\), \(\beta\geqslant0\), \(ac \geqslant 0\), \(\omega^2 = 1\), and \(\gamma<0\).
\end{lemma}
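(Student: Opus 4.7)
The plan is to use $H(x,y)=x^2+y^2$ as a candidate scalar function for the virtual system (\ref{39*}) and show that $\dot{H}\leq 0$ everywhere on the boundary circle $H=r^2$, provided the driving trajectories $(x_j(t),y_j(t))$ also lie in the disk. Inward (or tangent) flow on the boundary is exactly the trapping-region condition required. The strategy is thus to split $\dot{H}$ into an intrinsic part coming from the single-oscillator dynamics and a coupling part coming from the sums $\sum x_j,\sum y_j$, then bound each piece separately.

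First, I would compute $\dot{H}=2x\dot{x}+2y\dot{y}$ directly from (\ref{39*}). The crucial algebraic simplification comes from the hypothesis $\omega^2=1$: the term $2xy$ (from the $y$ in $\dot{x}$) cancels exactly with $-2\omega^2xy=-2xy$ (from the $-\omega^2 x$ in $\dot{y}$). What remains is
\begin{equation*}
\dot{H}=-2\alpha x^2y^2-2\beta y^4+2\gamma y^2+\frac{2ca}{N}\sum_{j=1}^{N}\bigl(xx_j+yy_j-x^2-y^2\bigr).
\end{equation*}
Under the assumptions $\alpha\geqslant 0$, $\beta\geqslant 0$, and $\gamma<0$, the intrinsic part $-2\alpha x^2y^2-2\beta y^4+2\gamma y^2$ is manifestly $\leqslant 0$ for every $(x,y)$.

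Next, I would bound the coupling part using the Cauchy-Schwarz inequality in $\mathbb{R}^2$: $xx_j+yy_j\leqslant\sqrt{x^2+y^2}\,\sqrt{x_j^2+y_j^2}$. On the boundary $H=r^2$ we have $\sqrt{x^2+y^2}=r$, and if each driving pair satisfies $x_j^2+y_j^2\leqslant r^2$, then $xx_j+yy_j\leqslant r^2=x^2+y^2$. Summing over $j$ yields $\sum_j(xx_j+yy_j-x^2-y^2)\leqslant 0$, and since $ca\geqslant 0$, the whole coupling term is $\leqslant 0$. Adding both contributions gives $\dot{H}\leqslant 0$ on the boundary, which is the trapping property.

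The step that will need the most care, and which I expect to be the main obstacle, is justifying the assumption $x_j^2+y_j^2\leqslant r^2$ that underlies the Cauchy-Schwarz bound: the virtual system (\ref{39*}) depends on the trajectories $(x_j(t),y_j(t))$ of the original network (\ref{38*}), yet those trajectories are themselves particular solutions of the virtual system, so the bound on $(x_j,y_j)$ is part of what we want to conclude. I would resolve this circularity by a first-exit-time argument applied to $M(t):=\max_j\bigl(x_j^2(t)+y_j^2(t)\bigr)$: if all oscillators start inside the closed disk and some leave, let $t_0$ be the earliest time at which $M(t_0)=r^2$ is attained with $\dot{M}(t_0^+)\geqslant 0$, and let $i^*$ be an index achieving the maximum. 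At $t_0$ every $(x_j,y_j)$ still lies in the closed disk, so the calculation above, applied with $(x,y)=(x_{i^*},y_{i^*})$, gives $\dot{H}_{i^*}(t_0)\leqslant 0$, contradicting outward escape. Hence the disk is forward-invariant for the whole collection and is a genuine trapping region of the virtual system.
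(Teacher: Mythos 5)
Your computation of $\dot H$ on the boundary circle is the same as the paper's: the exact cancellation of the $2xy$ terms when $\omega^2=1$, the nonpositivity of the intrinsic part $-2(\alpha x^2+\beta y^2-\gamma)y^2$ under $\alpha,\beta\geqslant 0$ and $\gamma<0$, and a Cauchy--Schwarz bound on the coupling sum. Where you genuinely diverge is in the step you yourself flag as the obstacle: controlling the driving trajectories $(x_j(t),y_j(t))$. The paper does not prove forward-invariance oscillator-by-oscillator. It invokes Lemma 4.5, whose Lyapunov function $V=\sum_i(\tfrac{1}{2}x_i^2+\tfrac{1}{2\omega^2}y_i^2)$ equals $\tfrac12\sum_j r_j^2$ when $\omega^2=1$ and is non-increasing along the network, giving the aggregate bound $\sum_j r_j^2(t)\leqslant \sum_j r_j^2(0)\leqslant Nr^2$; the discrete Cauchy--Schwarz inequality over the index $j$ then yields $\sum_j r\,r_j(t)\leqslant Nr^2$, which is all the coupling estimate needs. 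You instead use pointwise Cauchy--Schwarz in $\mathbb{R}^2$ plus a first-exit-time bootstrap on $M(t)=\max_j(x_j^2+y_j^2)$ to conclude the stronger statement that each $r_j(t)\leqslant r$ individually. Both routes resolve the circularity; yours makes the lemma self-contained (no appeal to the LaSalle argument of Appendix B), while the paper's needs only an average bound that the Lyapunov function delivers for all $t$ with no bootstrap. The one caveat on your side is the standard delicacy of exit-time arguments with non-strict inequalities: $\dot H_{i^*}(t_0)\leqslant 0$ alone does not forbid escape (consider $h(t)=t^3$ at $t=0$), so closing your argument rigorously requires a strict inequality or a perturbation refinement --- though the paper's proof likewise only establishes $\nabla H\cdot\langle\dot x,\dot y\rangle\leqslant 0$ against a definition of trapping region that asks for strictly inward flow, so this looseness is shared rather than introduced by you.
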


\begin{proof}
	See Appendix \ref{appB:subsec3}.
\end{proof}

By Lemma \ref{lem4.1}, when \(\omega^2=1\), the contraction region \(\mathfrak{C}_{RvdP}\) of the virtual system (\ref{39*}) becomes

\begin{equation*}
	\begin{split}
		{\mathfrak{C}_{RvdP}}_{\omega^2=1}\equiv&\left\{(x,y)\in \mathbb{R}^2: c^{2}a^{2}+ca\mu(x,y)-\nu^2(x,y)>0\right\} \\
		=&\left\{(x,y)\in \mathbb{R}^2:g(x,y)>0\right\}.
	\end{split}
\end{equation*}

\noindent where \(g(x,y)\equiv c^{2}a^{2}-ca\gamma+ca\alpha  x^2+3ca \beta  y^2- \alpha^2  x^2 y^2\). 

\begin{lemma}\label{lem4.4}
	Provided \(ca >\max \{\gamma,0\}\), \({\mathfrak{R}_{RvdP}}_{\omega^2=1}\) is contained in \({\mathfrak{C}_{RvdP}}_{\omega^2=1}\) in the following four cases
	
	1. \(3 \beta r^2>\gamma-ca\) and \(\alpha = 0\);
	
	2.  \(r^2 \leqslant \frac{ca\alpha-3ca\beta}{\alpha^2}\) and \(3 \beta r^2>\gamma-ca\) with \(0 \neq \alpha>3\beta\);
	
	3. \(r^2 \leqslant \frac{3ca \beta-ca \alpha}{\alpha^2}\) and \(\alpha r^2>\gamma-ca\) with \(0 \neq \alpha<3\beta\);
	
	4.  \(r^2 \geqslant \max\{\frac{ca \alpha-3ca \beta}{\alpha^2}, \frac{3ca \beta-ca \alpha}{\alpha^2}\}\) and
	\begin{align*}
		\frac{ca \alpha+3ca \beta-2\sqrt{3c^2a^2 \alpha \beta+\alpha^2 ca (ca-\gamma)}}{\alpha^2}< r^2 \\
		< \frac{ca \alpha+3ca \beta+2\sqrt{3c^2a^2 \alpha \beta+\alpha^2 ca (ca-\gamma)}}{\alpha^2}
	\end{align*}

	with \(\alpha\) and \(\beta\) satisfying \(3ca\alpha\beta+\alpha^2 (ca-\gamma) > 0\).
\end{lemma}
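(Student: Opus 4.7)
The plan is to reduce the containment of ${R_{RvdP}}_{\omega^2=1}$ in ${C_{RvdP}}_{\omega^2=1}$ to showing $g(x,y)>0$ on the closed disk $x^2+y^2\leqslant r^2$. Because $g$ depends only on $x^2$ and $y^2$, I introduce $u=x^2$, $v=y^2$ and set $\tilde{g}(u,v)=c^2a^2-ca\gamma+ca\alpha\,u+3ca\beta\,v-\alpha^2 uv$. The disk maps onto the closed triangle $T=\{(u,v):u,v\geqslant 0,\;u+v\leqslant r^2\}$, so it suffices to verify $\tilde{g}>0$ on $T$ under each of the four listed cases. The structural observation I exploit is that $\tilde{g}$ is bilinear in $(u,v)$: its Hessian is the constant matrix with $0$ on the diagonal and $-\alpha^2$ off-diagonal, with eigenvalues $\pm\alpha^2$ whenever $\alpha\neq 0$. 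Thus any interior critical point of $\tilde{g}$ is a saddle, and the minimum of $\tilde{g}$ over the compact $T$ must be attained on the boundary $\partial T$.

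The two axis-edges are immediate: on $u=0$, $\tilde{g}(0,v)=ca(ca-\gamma)+3ca\beta v$, which is strictly positive because $ca>\max\{\gamma,0\}$ makes $ca(ca-\gamma)$ positive while $3ca\beta v\geqslant 0$; the edge $v=0$ is symmetric. The technical content therefore sits on the hypotenuse $u+v=r^2$. Parameterizing by $u\in[0,r^2]$, I obtain the quadratic $q(u):=\tilde{g}(u,r^2-u)=\alpha^2 u^2+(ca\alpha-3ca\beta-\alpha^2 r^2)u+c^2a^2-ca\gamma+3ca\beta r^2$. Case~1 ($\alpha=0$) collapses $q$ to the constant $c^2a^2-ca\gamma+3ca\beta r^2$, whose positivity is exactly $3\beta r^2>\gamma-ca$ after dividing by $ca>0$. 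For $\alpha\neq 0$, $q$ is a convex parabola in $u$ with vertex $u^{*}=r^2/2+(3ca\beta-ca\alpha)/(2\alpha^2)$, and I split on the location of $u^{*}$ relative to $[0,r^2]$. The inequality $u^{*}\leqslant 0$ reduces to $r^2\leqslant(ca\alpha-3ca\beta)/\alpha^2$ (which forces $\alpha>3\beta$), giving Case~2 with $\min q=q(0)>0$. Symmetrically $u^{*}\geqslant r^2$ reduces to $r^2\leqslant(3ca\beta-ca\alpha)/\alpha^2$ (forcing $\alpha<3\beta$), giving Case~3 with $\min q=q(r^2)>0$.

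The substantive work is Case~4, in which $u^{*}$ lies in $[0,r^2]$ so that $\min q=q(u^{*})=c^2a^2-ca\gamma+3ca\beta r^2-(\alpha^2 r^2-ca\alpha+3ca\beta)^2/(4\alpha^2)$. Multiplying by $4\alpha^2>0$, expanding the square, and collecting powers of $R=r^2$ converts the inequality $q(u^{*})>0$ into a quadratic-in-$R$ inequality with leading coefficient $-\alpha^4$, i.e.\ a downward parabola in $R$ that is positive precisely between its two roots. I will then evaluate its discriminant and expect it to factor as $16\alpha^4\bigl[\,3c^2a^2\alpha\beta+\alpha^2 ca(ca-\gamma)\,\bigr]$, whose positivity is the final hypothesis of Case~4. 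The quadratic formula delivers the two roots $R_{\pm}=[ca\alpha+3ca\beta\pm 2\sqrt{3c^2a^2\alpha\beta+\alpha^2 ca(ca-\gamma)}]/\alpha^2$, and $q(u^{*})>0$ becomes exactly $R_{-}<r^2<R_{+}$, matching the interval in the lemma.

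I expect the main obstacle to be this Case~4 algebra: expanding $(\alpha^2 r^2-ca\alpha+3ca\beta)^2$, correctly grouping by powers of $R$, and verifying that the discriminant factors so that its positivity corresponds exactly to the hypothesis $3c^2a^2\alpha\beta+\alpha^2 ca(ca-\gamma)>0$. The reductive pieces — the change of variables $u=x^2$, $v=y^2$, the saddle observation for the bilinear $\tilde{g}$, the axis-edge positivity, and the case split on the location of $u^{*}$ — are routine once the bilinear structure of $\tilde{g}$ has been recognized.
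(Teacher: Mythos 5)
Your overall route is sound and is in fact cleaner than the paper's on the interior of the disk. The paper works directly with $g(x,y)$, computes its critical points $(0,0)$ and $(\pm\sqrt{3ca\beta/\alpha^2},\pm\sqrt{ca/\alpha})$, argues case by case about which of them lie inside the disk, and evaluates $g$ there; your substitution $u=x^2$, $v=y^2$ together with the observation that $\tilde g$ has indefinite constant Hessian (eigenvalues $\pm\alpha^2$) kills all interior minima at once and pushes everything to $\partial T$. On the hypotenuse your quadratic $q(u)$ is exactly the paper's quartic $h(x)$ written in the variable $u=x^2$, your vertex-location trichotomy reproduces the paper's Cases 2--4, and your Case 4 computation of $q(u^*)$, its discriminant $3c^2a^2\alpha\beta+\alpha^2ca(ca-\gamma)$, and the roots $R_\pm$ agrees with the paper's $l(r^2)$ analysis. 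So the substance matches; the packaging of the interior argument is genuinely better.

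There is, however, one step that fails as written. On the axis edge $u=0$ you justify $\tilde g(0,v)=ca(ca-\gamma)+3ca\beta v>0$ by claiming $3ca\beta v\geqslant 0$, and you call the edge $v=0$ ``symmetric''; this presumes $\beta\geqslant 0$ and $\alpha\geqslant 0$, which the lemma does not assume (Cases 1 and 2 explicitly admit $\beta<0$, and Cases 3 and 4 admit $\alpha<0$ --- the paper's own proof treats $\beta<0$ separately in Case 1). The same slip appears in your Case 1, where $q(u)$ is not constant but linear, $q(u)=c^2a^2-ca\gamma+3ca\beta r^2-3ca\beta u$. The repair is immediate and uses only what you already have: $\tilde g$ restricted to either axis edge is affine in one variable, so its minimum over the edge is attained at a corner of $T$; the corner $(0,0)$ gives $ca(ca-\gamma)>0$ from $ca>\max\{\gamma,0\}$, and the corners $(r^2,0)$ and $(0,r^2)$ lie on the hypotenuse, where positivity of $q(r^2)=ca(ca-\gamma+\alpha r^2)$ and $q(0)=ca(ca-\gamma+3\beta r^2)$ is already delivered by your hypotenuse analysis in each of the four cases. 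With that one-line rerouting the proof is complete.
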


\begin{proof}
	See Appendix \ref{appB:subsec4}.
\end{proof}

In these cases, any trajectory of the network (\ref{38*}) starting within \({\mathfrak{R}_{RvdP}}_{\omega^2=1}\) with appropriate coefficients will remain there due to Lemma \ref{lem4.3} and thence in \({\mathfrak{C}_{RvdP}}_{\omega^2=1}\) due to Lemma \ref{lem4.4} . Consequently, Lemma \ref{lem4.1} implies the following theorem.

\begin{theorem}\label{thm4.4}
	Consider a network of \(N\) all-to-all coupled identical RvdP oscillators with an identical full-state linear coupling described by Eq.~(\ref{38*}) with \(\alpha \geqslant 0\), \(\beta \geqslant 0\), \(ca >0\), \(\gamma<0\), and \(\omega^2=1\). In the cases stated in Lemma \ref{lem4.4}, the network is completely synchronized if its starting points, \(({x_{i}}_{0},{y_{i}}_{0})\), are taken in the region \({\mathfrak{R}_{RvdP}}_{\omega^2=1}\) for all \(i\) from 1 to \(N\).
\end{theorem}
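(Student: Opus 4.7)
The plan is to chain Lemmas 4.6, 4.7, and 4.4 in sequence, supplying the three ingredients needed for a contraction-theoretic synchronization argument: a forward-invariant trapping region for the virtual system, its inclusion in the contraction region, and the contraction-based synchronization criterion from Proposition 3.3. All three facts have been established under hypotheses that are implied by those of Theorem 4.8, so the proof reduces to assembling them in the right order and checking that the side conditions are consistent.

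First I would invoke Lemma 4.6: under $\alpha \geqslant 0$, $\beta \geqslant 0$, $ca \geqslant 0$, $\omega^2 = 1$, and $\gamma < 0$, the disk ${R_{RvdP}}_{\omega^2=1}$ is a trapping region of the virtual system (\ref{39*}). The strengthened hypothesis $ca > 0$ in Theorem 4.8 clearly implies $ca \geqslant 0$, so Lemma 4.6 applies. Because each trajectory $(x_i(t),y_i(t))$ of the original network (\ref{38*}) is a particular solution of the virtual system (\ref{39*}) by construction, the trapping property forces every $(x_i(t),y_i(t))$ to remain in ${R_{RvdP}}_{\omega^2=1}$ for all $t \geqslant 0$, provided the starting points $({x_i}_0,{y_i}_0)$ lie there.

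Next I would apply Lemma 4.7, whose four cases provide the inclusion ${R_{RvdP}}_{\omega^2=1} \subseteq {C_{RvdP}}_{\omega^2=1}$ whenever $ca > \max\{\gamma,0\}$ together with the coefficient restrictions stated in one of the cases. The hypothesis $ca > \max\{\gamma,0\}$ follows automatically from $ca > 0$ and $\gamma < 0$ in Theorem 4.8, and the case-by-case restrictions are inherited verbatim from the theorem's phrase ``in the cases stated in Lemma 4.7.'' Combining with the previous step, every trajectory stays inside ${C_{RvdP}}_{\omega^2=1}$ for all $t \geqslant 0$.

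Finally, Lemma 4.4 (the specialization of Proposition 3.3 to this network) says that once all trajectories are confined to ${C_{RvdP}}_{\omega^2=1}$ with $\tfrac{c}{N}\sum_j a_j > 0$, the network is completely synchronized. Since the identical full-state coupling gives $\tfrac{c}{N}\sum_j a_j = ca > 0$, the hypothesis is satisfied, and the synchronization conclusion follows. The mathematical difficulty has already been absorbed by Lemmas 4.5--4.7 (the LaSalle-based global attractor argument and the four-case polynomial inclusion), so the only real task at this stage is bookkeeping: checking that each invoked lemma's hypotheses are genuinely implied by those of Theorem 4.8 together with the starting-point assumption.
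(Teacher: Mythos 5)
Your proposal is correct and follows essentially the same route as the paper: the paper's justification of Theorem 4.8 is precisely the chain you describe (Lemma 4.6 gives forward invariance of \({R_{RvdP}}_{\omega^2=1}\) for trajectories of the network, Lemma 4.7 gives \({R_{RvdP}}_{\omega^2=1}\subset{C_{RvdP}}_{\omega^2=1}\) under the stated case conditions, and Lemma 4.4 with \(\tfrac{c}{N}\sum_j a_j = ca>0\) then yields complete synchronization). Your bookkeeping of the hypotheses, including the fact that the trapping property is itself conditional on all oscillators starting inside the disk, matches the paper's argument.
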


Note that this theorem is based on the contraction theory, so the synchronization rate is exponential due to Theorem \ref{thm2.1}. However, as can be seen in Lemma \ref{lem4.6} below, the rate of convergence of the network to the origin is also exponential if the initial conditions are in a neighborhood of the origin. Therefore, if the initial conditions are in \({\mathfrak{R}_{RvdP}}_{\omega^2=1}\) and close to the origin, we usually cannot determine whether the network can reach complete synchronization before it decays to the equilibrium position without numerical methods. In general, for large coupling constants \(c\) and \(a\), the complete synchronization can be achieved first. For initial conditions in \({\mathfrak{R}_{RvdP}}_{\omega^2=1}\) and far from the origin, in most cases the network will get completely synchronized before the oscillations stop. 

\begin{example}
	Consider a network of six all-to-all coupled identical RvdP oscillators with an identical full-state linear coupling described by Eq.~(\ref{38*}) with \(\alpha = 1\), \(\beta = 1\), \(\gamma  = -0.1\), \(\omega = 1\), \(a=1\), \(c=1\), and \(N  = 6\). Suppose the starting points of this system are
	
	\begin{gather*}
		(x_1(0),y_1(0))=(1,1.5), \; (x_2(0),y_2(0))=(0.3,-1.2), \\
		(x_3(0),y_3(0))=(-0.5,-1.4), \; (x_4(0),y_4(0))=(0.5,-0.5), \\
		(x_5(0),y_5(0))=(1.7,0.9), \; (x_6(0),y_6(0))=(0.3,-1).
	\end{gather*}
	
	Choose the trapping region, \({\mathfrak{R}_{RvdP}}_{\omega^2=1}\), to be a disk of radius \(r=2\). Then all the starting points are in \({\mathfrak{R}_{RvdP}}_{\omega^2=1}\). By Lemma \ref{lem4.4} (case 4 of the lemma), \({\mathfrak{R}_{RvdP}}_{\omega^2=1}\) is included in the contraction region, \({\mathfrak{C}_{RvdP}}_{\omega^2=1}\), of the virtual system of this network since
	
	\begin{gather*}
		4=r^2 \geqslant \max\{\frac{ca \alpha-3ca \beta}{\alpha^2}, \frac{3ca \beta-ca \alpha}{\alpha^2}\}=2, \\
		-0.05 \approx \frac{ca \alpha+3ca \beta-2\sqrt{3c^2a^2 \alpha \beta+\alpha^2 ca (ca-\gamma)}}{\alpha^2}< r^2 \\
		< \frac{ca \alpha+3ca \beta+2\sqrt{3c^2a^2 \alpha \beta+\alpha^2 ca (ca-\gamma)}}{\alpha^2} \approx 8.05, \\
		3ca\alpha\beta+\alpha^2 (ca-\gamma) =4.1 >0.
	\end{gather*}
	
	\noindent By Theorem \ref{thm4.4}, the network is completely synchronized.
	
	It can be seen from Fig. \ref{fig4.2} that the trajectories, \((x_i(t),y_i(t))\), of all oscillators in this network converge to the origin as predicted by Lemma \ref{lem4.2}, but all oscillators quickly reach complete synchronization before the amplitudes of the oscillations decrease to zero.
	
	\begin{figure}[htbp]
		\centering
		\begin{subfigure}{.25\textwidth}
			\centering
			\includegraphics[width=.8\linewidth]{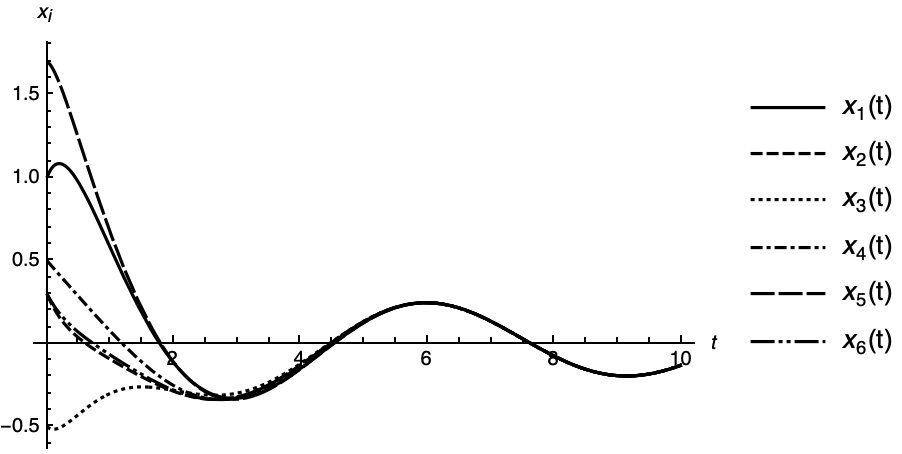}
			\caption{The time series of \(x_i\) for \(i\) from 1 to 6}
			\label{fig4.2:sub1}
		\end{subfigure}%
		\begin{subfigure}{.25\textwidth}
			\centering
			\includegraphics[width=.8\linewidth]{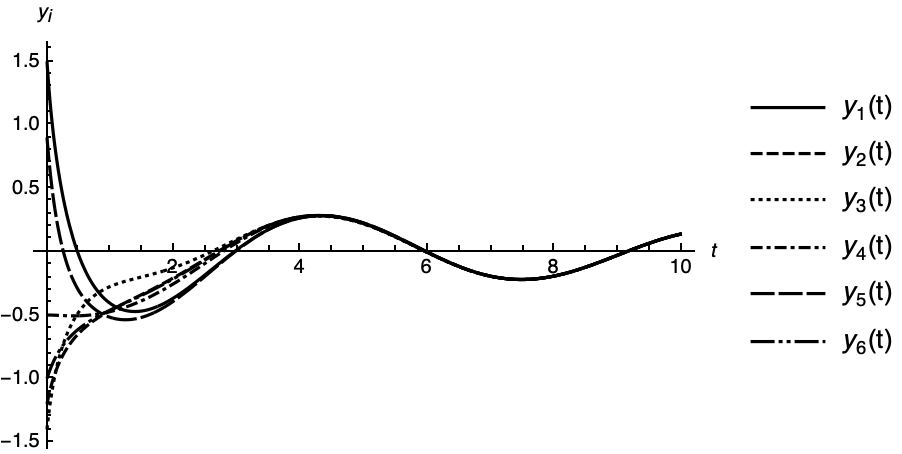}
			\caption{The time series of \(y_i\) for \(i\) from 1 to 6}
			\label{fig4.2:sub2}
		\end{subfigure}
		\begin{subfigure}{.25\textwidth}
			\centering
			\includegraphics[width=.8\linewidth]{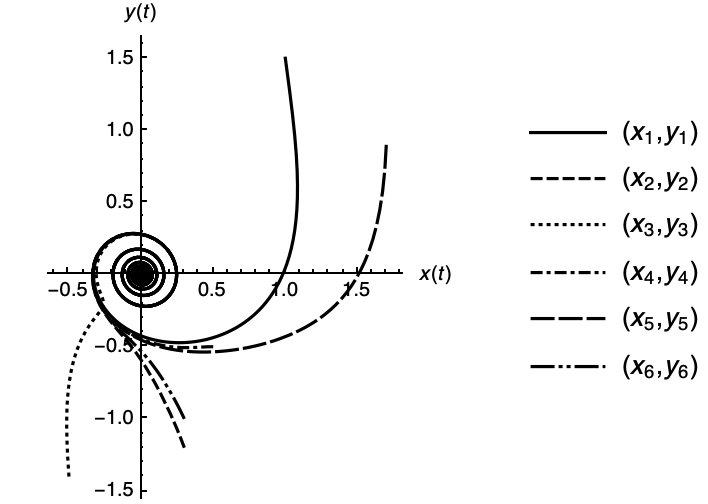}
			\caption{Phase portraits}
			\label{fig4.2:sub3}
		\end{subfigure}
		\caption{The time series and phase portraits of underdamped six all-to-all coupled identical RvdP oscillators with an identical full-state linear coupling}
		\label{fig4.2}
	\end{figure}	
	
\end{example}

Next, we consider the case when \(\gamma>0\). In this case, the origin may not be an asymptotically stable equilibrium point, and the system may exhibit periodic behavior. We attempt to find the conditions that make the system (\ref{38*}) self-sustained so that a stable limit cycle exists in its state space.

The following lemma will be used in the proof of Lemma \ref{lem4.6}.
\begin{lemma}\label{lem4.5}
	For \(N \times N\) matries \(W_N\) and \(S_N\) whose elements are defined as
	\begin{equation*}
		{W_N}_{i j}=\left\{\begin{array}{ll}
			a & \text { if { }} i=j \\
			b & \text{otherwise}
		\end{array}\right.
	\end{equation*}
	
	\noindent and
	
	\begin{equation*}
		{S_N}_{i j}=\left\{\begin{array}{ll}
			a & \text { if { }} i=j \neq 1 \\
			b & \text{otherwise}
		\end{array}\right.
	\end{equation*}
	
	\noindent respectively, we have
	\begin{equation} \label{40*}
		\begin{split}
			\text{det}[W_N]=&(a-b)^{N-1}[a+(N-1)b] \\
			\text{det}[S_N]=&b(a-b)^{N-1}.
		\end{split}
	\end{equation}
\end{lemma}

\begin{proof}
	See Appendix \ref{appB:subsec5}.
\end{proof}

\begin{lemma}\label{lem4.6}
	Suppose \(ac>0\) and \(\omega \neq 0\). When \(|\gamma| \geqslant 2|\omega|\) and \(\gamma<0\), the origin is a locally exponentially stable stationary point for the system (\ref{38*}) (overdamped oscillators for \(|\gamma| >2|\omega|\)). When \(|\gamma| < 2|\omega|\) and \(\gamma<2ca\),  the system (\ref{38*}) experiences a Hopf bifurcation at the origin when \(\gamma\) crosses 0, and the origin is locally exponentially stable for \(\gamma<0\) (underdamped oscillators) and unstable for \(\gamma>0\), and the bifurcation is supercritical if  \(\alpha \geqslant 0\) and \(\beta \geqslant 0\).
	
\end{lemma}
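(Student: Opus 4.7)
The plan is to exploit the permutation symmetry of (\ref{38*}) to diagonalise the $2N$-dimensional linearization at the origin, and then to use the invariance of the synchronous subspace to reduce the Hopf bifurcation to a single planar Rayleigh van der Pol equation.

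First, I would order the coordinates as $(x_{1},\dots,x_{N},y_{1},\dots,y_{N})^{T}$. The Jacobian of (\ref{38*}) at the origin takes the block form
\begin{equation*}
J_{0}=\begin{pmatrix} -caI_{N}+\tfrac{ca}{N}\mathbf{1}\mathbf{1}^{T} & I_{N} \\ -\omega^{2}I_{N} & (\gamma-ca)I_{N}+\tfrac{ca}{N}\mathbf{1}\mathbf{1}^{T} \end{pmatrix},
\end{equation*}
in which every $N\times N$ block is a polynomial in the rank-one projector $P=\tfrac{1}{N}\mathbf{1}\mathbf{1}^{T}$. The spectrum of $J_{0}$ therefore decouples under $\mathbb{R}^{N}=\mathrm{span}(\mathbf{1})\oplus\mathbf{1}^{\perp}$: on the one-dimensional synchronous direction the coupling contribution cancels and one reads off the two eigenvalues $\lambda_{\pm}^{\parallel}=\tfrac{1}{2}(\gamma\pm\sqrt{\gamma^{2}-4\omega^{2}})$, while on the $(N-1)$-dimensional transverse subspace $\mathbf{1}^{\perp}$ the coupling contributes $-ca$ on each diagonal block, producing $\lambda_{\pm}^{\perp}=\lambda_{\pm}^{\parallel}-ca$, each with multiplicity $N-1$.

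Local exponential stability for $\gamma<0$ then follows by inspecting the real parts. In the overdamped range $|\gamma|\geq 2|\omega|$, the hypothesis $\omega\neq 0$ gives $\sqrt{\gamma^{2}-4\omega^{2}}<|\gamma|=-\gamma$, so both $\lambda_{\pm}^{\parallel}<0$ and a fortiori $\lambda_{\pm}^{\perp}<-ca<0$. In the underdamped range $|\gamma|<2|\omega|$ the discriminant is negative and all eigenvalues form complex conjugate pairs with $\mathrm{Re}\,\lambda_{\pm}^{\parallel}=\gamma/2$ and $\mathrm{Re}\,\lambda_{\pm}^{\perp}=(\gamma-2ca)/2<0$, the latter thanks to $\gamma<2ca$. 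Hence the origin is exponentially stable for $\gamma<0$ and unstable for $\gamma>0$, with precisely the synchronous pair crossing the imaginary axis as $\gamma$ passes through $0$.

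For the Hopf bifurcation at $\gamma=0$, non-resonance and transversality are immediate from this spectrum: the critical pair $\lambda_{\pm}^{\parallel}=\pm i\omega$ is simple, all other eigenvalues sit at $-ca<0$, and $\tfrac{d}{d\gamma}\mathrm{Re}\,\lambda_{\pm}^{\parallel}|_{\gamma=0}=\tfrac{1}{2}\neq 0$. The key observation for the first Lyapunov coefficient is that because the oscillators are identical with permutation-symmetric coupling, the synchronous subspace $\{x_{1}=\dots=x_{N},\,y_{1}=\dots=y_{N}\}$ is \emph{exactly} invariant under the full nonlinear flow and at $\gamma=0$ coincides with the two-dimensional centre subspace; it therefore serves as a global centre manifold, so the first Lyapunov coefficient of (\ref{38*}) equals that of the reduced planar system $\dot{x}=y,\ \dot{y}=-\omega^{2}x+\gamma y-\alpha x^{2}y-\beta y^{3}$. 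A routine application of the Guckenheimer--Holmes planar formula, after the rescaling $\tilde{y}=-y/\omega$ that puts the linearisation in the standard rotation form $\bigl(\begin{smallmatrix}0&-\omega\\ \omega&0\end{smallmatrix}\bigr)$, yields $l_{1}=-(\alpha+3\beta\omega^{2})/8$.

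The main obstacle will be this last step: identifying the invariant synchronous subspace as an exact centre manifold (which sidesteps the usual approximate centre-manifold expansion) and confirming that all quadratic partial derivatives of the transformed nonlinearity vanish at the origin, so that only the two surviving cubic derivatives $G_{xx\tilde{y}}(0,0)=-2\alpha$ and $G_{\tilde{y}\tilde{y}\tilde{y}}(0,0)=-6\beta\omega^{2}$ enter the Lyapunov-coefficient formula. Since $\alpha+3\beta\omega^{2}>0$ whenever $\alpha\geq 0$ and $\beta\geq 0$ are not simultaneously zero, one concludes $l_{1}<0$ and hence that the bifurcation is supercritical as claimed.
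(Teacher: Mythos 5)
Your proposal is correct, and it reaches the same spectrum (your $\lambda^{\parallel}_{\pm}$ and $\lambda^{\perp}_{\pm}$ are exactly the paper's $\lambda_{2\pm}$ and $\lambda_{1\pm}$), but it gets there and finishes differently. For the linear part, the paper orders coordinates so the Jacobian is built from the complete-graph Laplacian $L_N$, commutes the blocks to reduce the $2N\times 2N$ determinant to an $N\times N$ one with entries $p(\lambda)$ and $q(\lambda)$, and then invokes a separate induction (Lemma 4.10) to factor it as $(p-q)^{N-1}[p+(N-1)q]$; your decomposition $\mathbb{R}^{N}=\mathrm{span}(\mathbf{1})\oplus\mathbf{1}^{\perp}$ via the projector $\tfrac{1}{N}\mathbf{1}\mathbf{1}^{T}$ obtains the same factorization without the determinant lemma and makes the multiplicities transparent. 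The more substantive divergence is in the supercriticality step: the paper does \emph{not} compute a Lyapunov coefficient here at all -- it argues softly that for $\gamma<0$ with $\alpha,\beta\geqslant 0$ the origin is globally asymptotically stable by Lemma 4.5, so no (unstable) limit cycle can coexist with it, hence the bifurcation cannot be subcritical. Your route instead observes that the synchronous subspace is exactly invariant, coincides with the centre eigenspace at $\gamma=0$ (with all transverse eigenvalues at $-ca<0$), and therefore carries the reduced planar Rayleigh van der Pol dynamics, for which the standard planar formula gives $l_{1}=-(\alpha+3\beta\omega^{2})/8$. This is more work but buys strictly more: it is quantitative, it applies for any $N$, and it already covers the sign condition $\alpha+3\beta\omega^{2}>0$ with $\alpha$ or $\beta$ negative -- a case the paper only handles later, for $N=2$, via the much heavier Hassard--Kazarinoff--Wan machinery of Theorem 4.12 (whose $\beta_{2}=-(\alpha+3\beta\omega^{2})/(4\omega^{2})$ agrees in sign with your $l_{1}$). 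The only caveat, which you already flag, is the degenerate case $\alpha=\beta=0$, where $l_{1}=0$ and neither your argument nor the paper's establishes supercriticality; that edge case is implicit in the lemma's statement as written.
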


\begin{proof}
	The Jacobian of the linearized system (\ref{38*}) about the origin is given by
	\begin{equation*}
		J=\left(\begin{array}{cc}
			-\frac{ca}{N} L_{N} &I_{N} \\
			-\omega^2 I_{N}  & \gamma I_{N}-\frac{ca}{N} L_{N}  \\
		\end{array}\right),
	\end{equation*}
	\noindent where \(I_{N}\) is an \(N \times N\) identical matrix and \(L_{N}\) is an \(N \times N\) Laplacian matrix of complete graph whose elements are given by
	\begin{equation*}
		{L_N}_{i j}=\left\{\begin{array}{lll}
			N-1 & \text { if { }} i=j \\
			-1 & \text { otherwise }
		\end{array}\right..
	\end{equation*}
	
	\noindent Hence an eigenvalue \(\lambda\) of \(J\) satisfies
	
	\begin{equation*}
		\text{det}\left(\begin{array}{cc}
			-\frac{ca}{N} L_{N}-\lambda I_{N} &I_{N} \\
			-\omega^2 I_{N}  & (\gamma-\lambda) I_{N}-\frac{ca}{N} L_{N}  \\
		\end{array}\right)=0.
	\end{equation*}
	
	\noindent Since \(I_{N}\) is an identical matrix, the lower two blocks in matrix \(J\) commute. Using a property of block matrix determinant, we have
	
	\begin{equation*}
		\begin{split}
			0=&\text{det}\left(\begin{array}{cc}
				-\frac{ca}{N} L_{N}-\lambda I_{N} &I_{N\times N} \\
				-\omega^2 I_{N}  & (\gamma-\lambda) I_{N}-\frac{ca}{N} L_{N}  \\
			\end{array}\right) \\
			=&\text{det}((-\frac{ca}{N} L_{N}-\lambda I_{N})((\gamma-\lambda) I_{N}-\frac{ca}{N} L_{N})+\omega^2 I^2_{N})\\
			=&\text{det}(\frac{c^2a^2}{N^2}L^2_{N}+\frac{ca}{N}(2\lambda-\gamma)L_{N}+[\omega^2-\lambda(\gamma-\lambda)]I_{N}).
		\end{split}
	\end{equation*}
	
	It can be calculated that
	\begin{equation*}
		{L^2_N}_{i j}=\left\{\begin{array}{lll}
			N^2-N & \text { if { }} i=j \\
			-N & \text { otherwise }
		\end{array}\right..
	\end{equation*}
	
	\noindent We therefore have
	\begin{equation*}
		\begin{split}
			0=\text{det}\left(\begin{array}{cccc}
				p(\lambda) & q(\lambda) & \cdots & q(\lambda) \\
				q(\lambda)  & p(\lambda) & \cdots & q(\lambda)  \\
				\vdots & \vdots & \ddots & \vdots\\
				q(\lambda)  & q(\lambda) & \cdots & p(\lambda)
			\end{array}\right)_{N \times N},
		\end{split}
	\end{equation*}
	
	\noindent where 
	\begin{align*}
		p(\lambda)=&\lambda^2+(2ca\frac{N-1}{N}-\gamma)\lambda+c^2a^2-ca\gamma+\frac{ca\gamma-c^2 a^2}{N}+\omega^2, \\
		q(\lambda)=&-\frac{2ca}{N}\lambda+\frac{ca\gamma-c^2a^2}{N}.
	\end{align*}
	
	Using Lemma \ref{lem4.5}, \(\lambda\) satisfies 
	\begin{equation} \label{41*}
		(p-q)^{N-1}[p+(N-1)q](\lambda)=0.
	\end{equation}
	
	\noindent One can find
	
	\begin{equation} \label{42*}
		(p-q)(\lambda)=\lambda^2+(2ca-\gamma)\lambda+c^2a^2-ca\gamma+\omega^2,\\
	\end{equation}
	
	\begin{equation} \label{43*}
		[p+(N-1)q](\lambda)=\lambda^2-\gamma\lambda+\omega^2.
	\end{equation}
	
	\noindent It follows that two roots of the function (\ref{42*}) are a pair of solutions of Eq.~(\ref{41*}) with multiplicity \(N-1\) and two roots of the function (\ref{43*}) are a pair of solutions of Eq.~(\ref{41*}) with multiplicity \(1\). One can find the roots of (\ref{42*}) and (\ref{43*}) are
	
	\begin{align*}
		\lambda_{1\pm}=&\frac{1}{2}(-2ac+\gamma \pm \sqrt{\gamma^2-4\omega^2}), \\
		\lambda_{2\pm}=&\frac{1}{2}(\gamma \pm \sqrt{\gamma^2-4\omega^2})
	\end{align*}
	
	\noindent respectively.  Assume that \(ac \geqslant 0\) and \(\omega \neq 0\). Then for \(\gamma < 0\) and \(|\gamma| \geqslant 2|\omega|\), \(\lambda_{1\pm}\) and \(\lambda_{2\pm}\) are all negative reals since
	\begin{equation*}
		\gamma+\sqrt{\gamma^2-4\omega^2} < 0.
	\end{equation*}
	
	\noindent Therefore, in this case, \(J\) has \(2N\) negative real eigenvalues, which implies the origin is locally exponentially stable. For \(|\gamma| > 2|\omega|\),  each pair of roots is distinct from each other, so the oscillations are overdamped. For \(|\gamma|<2|\omega|\), if \(\gamma<2ca\), \(J\) has the same \(N-1\) pairs of complex conjugate eigenvalues with negative real parts , \{$\lambda_{1+}$,$\lambda_{1-}$\}, and a pair of complex conjugate eigenvalues, \{$\lambda_{2+}$,$\lambda_{2-}$\}, which cross the imaginary axis due to a variation of \(\gamma\).
	
	Note that
	\begin{align*}
		\left.\text{Re}(\lambda_{2\pm})(\gamma)\right|_{\gamma=0}=&0 \\
		\left.\frac{d\text{Re}(\lambda_{2\pm})(\gamma)}{d\gamma}\right|_{\gamma=0}=&\frac{1}{2}>0, \\
	\end{align*}
	
	\noindent where \(\text{Re}(\lambda_{2\pm})\) denote the real parts of \(\lambda_{2\pm}\). According to the Hopf bifurcation theorem, a Hopf bifurcation arises from the origin at \(\gamma=0\), and the origin is locally exponentially stable for \(\gamma<0\) and unstable for \(\gamma>0\). For \(\gamma<0\), the oscillations are underdamped since \(J\) has \(2N\) complex conjugate eigenvalues with negative real parts in this case. If we further have \(\alpha \geqslant 0\) and \(\beta \geqslant 0\), then the origin is globally asymptotically stable according to Lemma 4.5, which means that there is no limit cycle around the origin for \(\gamma<0\). Thus, the Hopf bifurcation must be supercritical. 
\end{proof}

By the Hopf bifurcation theorem for multi-dimensional systems, for sufficiently small values of \(\gamma \neq 0\) a continuous family of stable limit cycles \(\{\xi(\gamma)\}\), with the parameter \(\gamma\), bifurcates from the origin into the region \(\gamma>0\) in a \(2N\) dimensional state space when the supercritical Hopf bifurcation occurs in the coupled-oscillator system (\ref{38*}). By an empirical rule of thumb for supercritical Hopf bifurcations, the size of \(\xi(\gamma)\) grows continuously from zero and increases proportionally to \(\sqrt{\gamma}\). For fixed values of \(\alpha\) and \(\beta\) and a sufficiently small variable \(\gamma>0\), the ``projection" of the limit cycle \(\xi(\gamma)\) onto each \(x_iy_i\)-plane is roughly a circle, \(\xi_i(\gamma)\), of radius, \(r_i(\gamma)\propto\sqrt{\gamma}\), around the origin, whose shape is distorted as \(\gamma\) gets larger. It is evident that if a network that starts near its stable limit cycle \(\xi(\gamma)\) is completely synchronized, then all projections \(\xi_i(\gamma)\) must coincide and have the same radius \(r(\gamma)\). For a sufficiently small value of \(\gamma\), the origin is, in fact, the only equilibrium point of the system (\ref{38*}) with \(\alpha \geqslant 0\), \(\beta \geqslant 0\), and \(ca>0\), and the bifurcated limit cycle \(\xi(\gamma)\) is also unique, which makes \(\xi(\gamma)\) a globally stable limit cycle so that every trajectory of (\ref{38*}) that does not start at the origin will spiral into \(\xi(\gamma)\) as time approaches infinity in this case. 

When a globally stable limit cycle \(\xi(\gamma)\) that evolves around the origin exists in a coupled-oscillator system (\ref{38*}), it is plausible to conjecture that if the projected trajectories \((x_j(t),y_j(t))\) of (\ref{38*}) start in a disk of radius \(R\) that contains \(\xi_j(\gamma)\) for all \(j\) from 1 to \(N\), then we have

\begin{equation} \label{45*}
	\sum_{j=1}^{N} x_{j}^2(t)+y_{j}^2(t) \leqslant NR^2.
\end{equation}

\noindent Then we can see from the proof of Lemma \ref{lem4.3} that at any point, \((x,y)\), on the boundary of the disk \({\mathfrak{R}_{RvdP}}_{\omega^2=1}\) of radius \(r\) we have

\begin{equation*}
	\begin{split}
		\nabla H \cdot \langle \dot{x}(t), \dot{y}(t)\rangle=&2x \dot{x}(t)+2y \dot{y}(t) \\
		=&-2\left(\alpha x^{2}+\beta y^2-\gamma\right)y^2 \\
		&+2\frac{ca}{N}(\sum_{j=1}^{N} (xx_{j}(t)+yy_{j}(t))-N(x^2+y^2)) \\
		\leqslant &2\gamma(x^2+y^2)+2\frac{ca}{N}(\sum_{j=1}^{N} (xx_{j}(t)+yy_{j}(t))-N(x^2+y^2)) \\
		= &2\frac{ca}{N}(\sum_{j=1}^{N} rr_{j}(t)\cos \theta(t)-N(1-\frac{\gamma}{ca})r^2) \\
		\leqslant &2\frac{ca}{N}(\sum_{j=1}^{N} rr_{j}(t)-N(1-\frac{\gamma}{ca})r^2).
	\end{split}
\end{equation*}

\noindent Suppose that \({x^2_j}_0+{y^2_j}_0 \leqslant (1-\frac{\gamma}{ca})^2r^2\)  and that \(r_j(\gamma)<(1-\frac{\gamma}{ca})r\) for all \(j\). By the Cauchy–Schwarz inequality and Eq.~(\ref{45*}), we have

\begin{equation*}
	\frac{(\sum_{j=1}^{N} r_{j}(t))^2}{N} \leqslant \sum_{j=1}^{N} r_{j}^2(t) \leqslant N(1-\frac{\gamma}{ca})^2r^2.
\end{equation*}

\noindent Hence,

\begin{equation*}
	\sum_{j=1}^{N} rr_{j}(t) \leqslant N(1-\frac{\gamma}{ca})r^2.
\end{equation*}

\noindent Therefore, \(\nabla H \cdot \langle \dot{x}(t), \dot{y}(t)\rangle \leqslant 0\) on the boundary of \({\mathfrak{R}_{RvdP}}_{\omega^2=1}\). It follows that the disk, \({\mathfrak{R}_{RvdP}}_{\omega^2=1}\), having radius \(r>\max \{r_j(\gamma)\}(1-\frac{\gamma}{ca})^{-1}\) is a ``vague" trapping region of the virtual system (\ref{39*}) with \(\alpha \geqslant 0\), \(\beta \geqslant 0\), \(ca > 0\), \(\omega^2=1\), and with a sufficiently small value of \(\gamma\) satisfying \(0<\gamma < 2\min\{ca,1\}\) in the sence that every trajectory that starts within a smaller disk of radius \((1-\frac{\gamma}{ca})r\) which is included in \({\mathfrak{R}_{RvdP}}_{\omega^2=1}\) will remain in  \({\mathfrak{R}_{RvdP}}_{\omega^2=1}\) as the system (\ref{39*}) evolves, and that the corresponding network (\ref{38*}) is completely synchronized in the cases given in Lemma \ref{lem4.4} if \({x^2_i}_0+{y^2_i}_0 \leqslant (1-\frac{\gamma}{ca})^2r^2\) for all \(i\) from 1 to \(N\) and \(\sum_{i=1}^{N} {x^2_i}_0+{y^2_i}_0 \neq 0\).

We do not prove the inequality (\ref{45*}). However, this is not a crucial problem if we do not limit ourselves to exponential synchronization because any trajectory of (\ref{38*}) that does not start from the origin would eventually get infinitely close to \(\xi(\gamma)\) on which (\ref{45*}) holds for \(R \geqslant \max \{r_j(\gamma)\}\). 	

\begin{example}
	Consider a network of six all-to-all coupled identical RvdP oscillators with an identical full-state linear coupling described by Eq.~(\ref{38*}) with \(\alpha = 1\), \(\beta = 1\), \(\gamma  = 0.6\), \(\omega = 1\), \(a=1\), \(c=1\), and \(N  = 6\). Assuming \(\max \{r_j(\gamma)\}\sim \sqrt{\gamma}\), one may choose the ``vague" trapping region, \({\mathfrak{R}_{RvdP}}_{\omega^2=1}\), to be a disk of radius \(r=2>\sqrt{\gamma}(1-\frac{\gamma}{ca})^{-1} \approx 1.94\). The starting points of this system are
	
	\begin{gather*}
		({x_1}_0,{y_1}_0)=(0.6,0.5), \; ({x_2}_0,{y_2}_0)=(0.3,-0.2), \\
		({x_3}_0,{y_3}_0)=(-0.7,-0.3), \; ({x_4}_0,{y_4}_0)=(0.5,-0.5), \\
		({x_5}_0,{y_5}_0)=(0.1,0.4), \; ({x_6}_0,{y_6}_0)=(0,0).
	\end{gather*}
	
	\noindent We have \({x^2_i}_0+{y^2_i}_0 \leqslant (1-\frac{\gamma}{ca})^2r^2=0.64\) for all \(i\) from 1 to 6, and it is obvious that \(\sum_{i=1}^{6} {x^2_i}_0+{y^2_i}_0 \neq 0\).
	
	By Lemma \ref{lem4.4} (case 4 of the lemma),  \({\mathfrak{R}_{RvdP}}_{\omega^2=1}\) is included in \({\mathfrak{C}_{RvdP}}_{\omega^2=1}\) since
	
	\begin{gather*}
		4=r^2 \geqslant \max\{\frac{ca \alpha-3ca \beta}{\alpha^2}, \frac{3ca \beta-ca \alpha}{\alpha^2}\}=2, \\
		4-2\sqrt{3.4}=\frac{ca \alpha+3ca \beta-2\sqrt{3c^2a^2 \alpha \beta+\alpha^2 ca (ca-\gamma)}}{\alpha^2}< r^2 \\
		< \frac{ca \alpha+3ca \beta+2\sqrt{3c^2a^2 \alpha \beta+\alpha^2 ca (ca-\gamma)}}{\alpha^2}=4+2\sqrt{3.4}, \\
		3ca\alpha\beta+\alpha^2 (ca-\gamma) =3.4 >0.
	\end{gather*}
	
	\noindent Then, by the above analysis, we can predict that this network is completely synchronized, which can be seen in Fig. \ref{fig4.3}.
	
	\begin{figure}[htbp]
		\centering
		\begin{subfigure}{.25\textwidth}
			\centering
			\includegraphics[width=.8\linewidth]{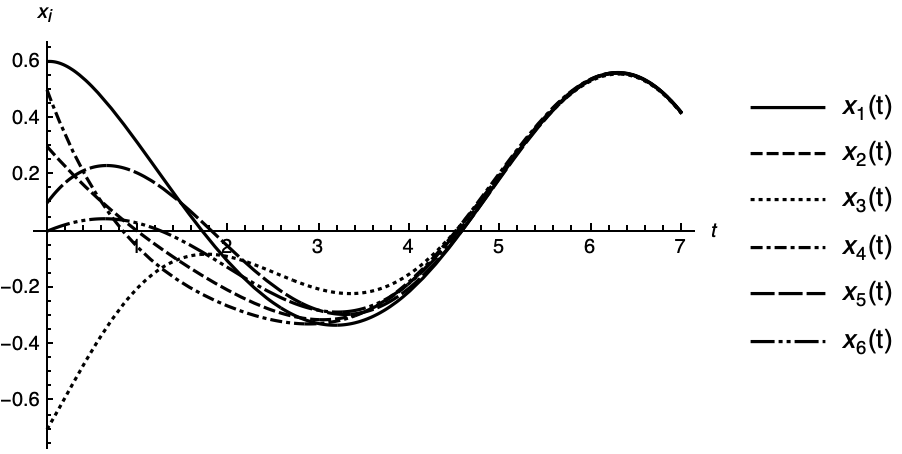}
			\caption{The time series of \(x_i\) for \(i\) from 1 to 6}
		\end{subfigure}%
		\begin{subfigure}{.25\textwidth}
			\centering
			\includegraphics[width=.8\linewidth]{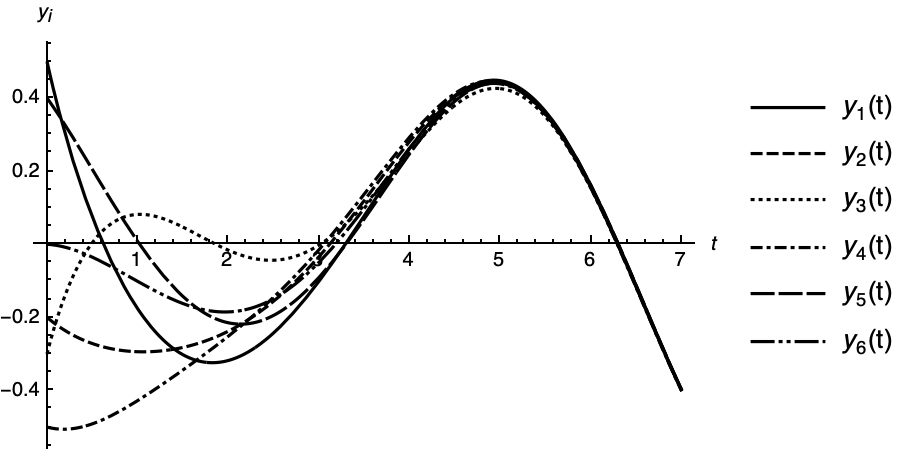}
			\caption{The time series of \(y_i\) for \(i\) from 1 to 6}
		\end{subfigure}
		\begin{subfigure}{.25\textwidth}
			\centering
			\includegraphics[width=.8\linewidth]{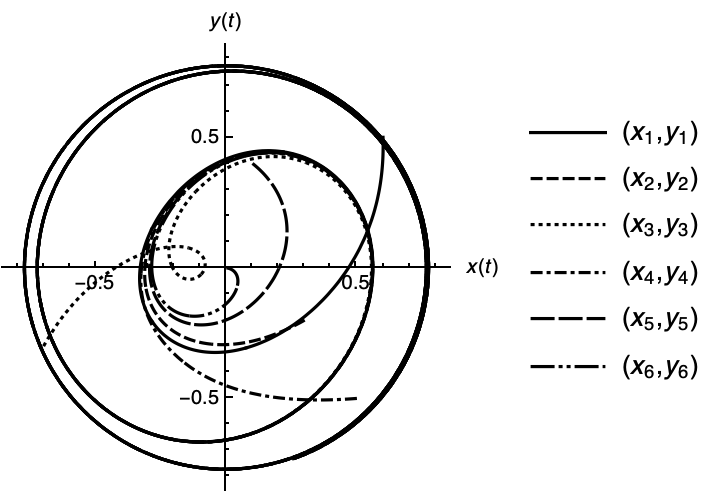}
			\caption{Phase portraits}
		\end{subfigure}
		\caption{The time series and phase portraits of six self-sustained all-to-all coupled identical RvdP oscillators with an identical full-state linear coupling}
		\label{fig4.3}
	\end{figure}	
	
\end{example}

Determining the direction of the Hopf bifurcation in the system (\ref{38*}) at \(\gamma=0\) is nontrivial when \(\alpha\) or \(\beta\) is negative. Moreover, the large number of oscillators in the network makes it even more difficult. For simplicity, we only consider the case of two coupled oscillators, i.e., \(N=2\).

For a system of the form (1) with the bifurcation parameter \(\nu\), let \(J_{\pmb{f}}(\nu)\) be the Jacobian at the equilibrium point \(\pmb{x}_{*}(\nu)\), i.e., \(J_{\pmb{f}}(\nu)=\frac{\partial \pmb{f}(\pmb{x};\nu)}{\partial {\pmb{x}}}|_{\pmb{x}=\pmb{x}_{*}}\). Suppose \(J_{\pmb{f}}(\nu)\) has \(m\) pairs of complex conjugate eigenvalues, \(\{\lambda_{i+}, \lambda_{i-}\}\), where \(\lambda_{i+}\) have positive imaginary parts and \(\lambda_{i-}\) have negative ones, and \(n-2m\) real eigenvalues, \({\lambda_{r}}_j\). Let \(\text{Re}(\lambda_{1+}) \geqslant \text{Re}(\lambda_{2+}) \geqslant \cdots \geqslant \text{Re}(\lambda_{m+})\) and let \({\lambda_r}_1 \geqslant {\lambda_r}_2 \geqslant \cdots \geqslant {\lambda_r}_{n-2m}\). The system can be rewritten as \(\dot{\pmb{x}}=\pmb{f}(\pmb{x};\nu)=J_{\pmb{f}}(\nu)\pmb{x}+\pmb{h}(\pmb{x})\), where \(\pmb{h}(\pmb{x})\) is the nonlinear term.

The Hopf bifurcation theorem gives that the Hopf bifurcation occurs at \(\nu_{c}\) where \(\nu_{c}\) satiesfies \(\text{Re}(\lambda_{1+}(\nu_{c}))=0\), \(\frac{d\text{Re}(\lambda_{1+}(\nu_c))}{d\nu} \neq 0\), \(\text{Im}(\lambda_{1+}(\nu_{c})) \neq 0\), \(\text{Re}(\lambda_{i+}(\nu_{c}))<0\) for \(i\) from 2 to \(m\), and \({\lambda_r}_j(\nu_c)<0\) for \(j\) from 1 to \(n-2m\). Form a matrix \(P\equiv(\text{Re}(\pmb{v}_{1+}) \; -\text{Im}(\pmb{v}_{1+}) \; \cdots \; \text{Re}(\pmb{v}_{m+}) \; -\text{Im}(\pmb{v}_{m+}) \; {\pmb{v}_{r}}_1 \; \cdots \; {\pmb{v}_r}_{n-2m})\), where \(\pmb{v}_{i+}\) and \({\pmb{v}_{r}}_{j}\) are (generalized) eigenvectors of \(J_{\pmb{f}}(\nu_c)\) corresponding to \(\lambda_{i+}(\nu_{c})\) and \({\lambda_r}_{j}(\nu_{c})\) respectively. We have the following lemma\cite{hassard}.

\begin{lemma}\label{lem4.7}
	\(P\) is invertable and \(P^{-1}J_{\pmb{f}}(\nu_{c})P\) is in the Jordan canonical form, i.e., 
	\begin{equation} \label{44*}
		P^{-1}J_{\pmb{f}}(\nu_{c})P=\left(
		\begin{array}{r@{}c|c@{}l}
			&    \begin{smallmatrix}
				0 & -\text{Im}(\lambda_1(\nu_{c})) \\
				\text{Im}(\lambda_1(\nu_{c})) & 0\rule[-1ex]{0pt}{2ex}
			\end{smallmatrix} & \mbox{\large0} \\\hline
			&    \mbox{\large0} &  \mbox{D}
		\end{array} 
		\right),
	\end{equation}
	
	\noindent where \(D\) is a block diagonal matrix of original dimension \(n-2\), whose diagonals are Jordan blocks \(D_i\) that are either of the form
	
	\begin{equation*}
		\left(\begin{array}{cccc}
			B_i & I_{2 \times 2} & & \\
			& B_i & \ddots &  \\
			& & \ddots & I_{2 \times 2} \\
			& & & B_i \\
		\end{array}\right)
	\end{equation*}
	
	\noindent for \(i\) from 2 to \(m\) with
	
	\begin{equation*}
		B_i=\left(\begin{array}{cc}
			\text{Re}(\lambda_{i+}(\nu_{c}))&-\text{Im}(\lambda_{i+}(\nu_{c})) \\
			\text{Im}(\lambda_{i+}(\nu_{c}))  & \text{Re}(\lambda_{i+}(\nu_{c}))  \\
		\end{array}\right)
	\end{equation*}
	
	\noindent for complex eigenvalues \(\lambda_{i+}(\nu_{c})\), or of the form
	
	\begin{equation*}
		\left(\begin{array}{cccc}
			{\lambda_r}_j(\nu_{c}) & 1 & & \\
			& {\lambda_r}_j(\nu_{c})  & \ddots &  \\
			& & \ddots & 1 \\
			& & & {\lambda_r}_j(\nu_{c})  \\
		\end{array}\right),
	\end{equation*}
	
	\noindent for real eigenvalue \({\lambda_r}_j(\nu_{c})\) with \(j\) from 1 to \(n-2m\).
\end{lemma}

Perforem coordinate transformations from \(x\) to \(y\) frames by \(\pmb{x}=P\pmb{y}+\pmb{x}_{*}(\nu_c)\). Then the system at \(\nu=\nu_c\) is converted into

\begin{equation*}
	\begin{split}
		\dot{\pmb{y}}=P^{-1}\dot{\pmb{x}}=&P^{-1}\pmb{f}(P\pmb{y}+\pmb{x}_{*}(\nu_c)) \\
		=&P^{-1}J_{\pmb{f}}(\nu_c)P\pmb{y}+P^{-1}J_{\pmb{f}}(\nu_c)\pmb{x}_{*}(\nu_c)+\pmb{h}(P\pmb{y}+\pmb{x}_{*}(\nu_c)) \\
		\equiv &\pmb{F}(\pmb{y}),
	\end{split}
\end{equation*}

\noindent where \(\pmb{F}=\langle F_{1}, \cdots, F_{n}\rangle:\mathbb{R}^n\to \mathbb{R}^n\) is a vector-valued function. Then \(\frac{\partial \pmb{F}(\pmb{y})}{\partial {\pmb{y}}}|_{\pmb{y}=\pmb{0}}=P^{-1}J_{\pmb{f}}(\nu_c)P\). Thus, by Lemma \ref{lem4.7}, the Jacobian for \(\pmb{y}\)-system at \(\pmb{y}=\pmb{0}\) is in the Jordan canonical form. We calculate the following quantities at \(\pmb{y}=\langle y_{1}, \cdots, y_{n}\rangle=\pmb{0}\).

\begin{widetext}
	\begin{equation*}
		\begin{split}
			g_{11}=&\frac{1}{4}[\frac{\partial^2 F_1}{\partial {y_1}^2}+\frac{\partial^2 F_1}{\partial {y_2}^2}+i(\frac{\partial^2 F_2}{\partial {y_1}^2}+\frac{\partial^2 F_2}{\partial {y_2}^2})], \\
			g_{20}=&\frac{1}{4}[\frac{\partial^2 F_1}{\partial {y_1}^2}-\frac{\partial^2 F_1}{\partial {y_2}^2}+2\frac{\partial^2 F_2}{\partial y_1 \partial y_2}+i(\frac{\partial^2 F_2}{\partial {y_1}^2}-\frac{\partial^2 F_2}{\partial {y_2}^2}-2\frac{\partial^2 F_1}{\partial y_1 \partial y_2})], \\
			g_{02}=&\frac{1}{4}[\frac{\partial^2 F_1}{\partial {y_1}^2}-\frac{\partial^2 F_1}{\partial {y_2}^2}-2\frac{\partial^2 F_2}{\partial y_1 \partial y_2}+i(\frac{\partial^2 F_2}{\partial {y_1}^2}-\frac{\partial^2 F_2}{\partial {y_2}^2}+2\frac{\partial^2 F_1}{\partial y_1 \partial y_2})], \\
			G_{21}=&\frac{1}{8}[\frac{\partial^3 F_1}{\partial {y_1}^3}+\frac{\partial^3 F_1}{\partial {y_1} \partial {y_2}^2}+\frac{\partial^3 F_2}{\partial {y_1}^2 \partial {y_2}}+\frac{\partial^3 F_2}{\partial {y_2}^3}+i(\frac{\partial^3 F_2}{\partial {y_1}^3}+\frac{\partial^3 F_2}{\partial {y_1} \partial {y_2}^2}-\frac{\partial^3 F_1}{\partial {y_1}^2 \partial {y_2}}-\frac{\partial^3 F_1}{\partial {y_2}^3})], \\
			g_{21}=&G_{21}+\sum_{i=1}^{n-2} (2G^{i}_{110}w^{i}_{11}+G^{i}_{101}w^{i}_{20}),
		\end{split}
	\end{equation*}
\end{widetext}
\noindent where
\begin{equation*}
	\begin{split}
		G^{j-2}_{110}=&\frac{1}{2}[\frac{\partial^2 F_1}{\partial y_1 \partial y_j}+\frac{\partial^2 F_2}{\partial y_2 \partial y_j}+i(\frac{\partial^2 F_2}{\partial y_1 \partial y_j}-\frac{\partial^2 F_1}{\partial y_2 \partial y_j})], \\
		G^{j-2}_{101}=&\frac{1}{2}[\frac{\partial^2 F_1}{\partial y_1 \partial y_j}-\frac{\partial^2 F_2}{\partial y_2 \partial y_j}+i(\frac{\partial^2 F_2}{\partial y_1 \partial y_j}+\frac{\partial^2 F_1}{\partial y_2 \partial y_j})]
	\end{split}
\end{equation*}
\noindent for \(j=3,\cdots, n\), and where \(\pmb{w}_{11}=\langle w^{1}_{11}, \cdots, w^{n-2}_{11}\rangle\) and \(\pmb{w}_{20}=\langle w^{1}_{20}, \cdots, w^{n-2}_{20}\rangle\) are the solutions of 

\begin{equation*}
	\begin{split}
		&D\pmb{w}_{11}=-\pmb{h}_{11}, \\
		(&D-2\text{Im}(\lambda_1(\nu_{c})) i I_{n-2})\pmb{w}_{20}=-\pmb{h}_{20},
	\end{split}
\end{equation*}

\noindent respectively, with \(\pmb{h}_{11}=\langle h^{1}_{11}, \cdots, h^{n-2}_{11}\rangle\) and \(\pmb{h}_{20}=\langle h^{1}_{20}, \cdots, h^{n-2}_{20}\rangle\) whose elements are defined by

\begin{equation*}
	\begin{split}
		h^{j-2}_{11}=&\frac{1}{4}(\frac{\partial^2 F_j}{\partial {y_1}^2}+\frac{\partial^2 F_j}{\partial {y_2}^2}), \\
		h^{j-2}_{20}=&\frac{1}{4}(\frac{\partial^2 F_j}{\partial {y_1}^2}-\frac{\partial^2 F_j}{\partial {y_2}^2}-2i\frac{\partial^2 F_j}{\partial y_1 \partial y_2})
	\end{split}
\end{equation*}

\noindent for \(j=3,\cdots, n\), and with matrix \(D\) defined in Eq.~(\ref{44*}). We then calculate

\begin{equation*}
	\beta_2=\text{Re}[\frac{i}{\text{Im}(\lambda_1(\nu_{c}))}(g_{20}g_{11}-2|g_{11}|^2-\frac{|g_{02}|^2}{3})+g_{21}].
\end{equation*}

The limit cycle is stable (unstable) if \(\beta_2<0\) (\(\beta_2>0\)) on the side of \(\nu=\nu_{c}\) where the limit cycle appears, which indicates the Hoft bifurcation is supercritical (subcritical)\cite{hassard}. 

For system (\(\ref{38*}\)) with \(N=2\), we have

\begin{gather*}
	\pmb{f}(\pmb{x};\gamma)=\left(\begin{array}{c}
		x_3+\frac{ca(x_2-x_1)}{2} \\
		x_4+\frac{ca(x_1-x_2)}{2} \\
		-(\alpha {x_1}^2+\beta {x_3}^2-\gamma)x_3-\omega^2 x_1+\frac{ca(x_4-x_3)}{2} \\
		-(\alpha {x_2}^2+\beta {x_4}^2-\gamma)x_4-\omega^2 x_2+\frac{ca(x_3-x_4)}{2} \\
	\end{array}\right), \\
	P=\left(\begin{array}{cccc}
		0 & {|\omega|}^{-1} & 0 & -{|\omega|}^{-1} \\
		0 & {|\omega|}^{-1}& 0 & {|\omega|}^{-1} \\
		1 & 0 & -1 & 0 \\
		1 & 0 & 1 & 0 \\
	\end{array}\right), \\
	g_{11}=g_{20}=g_{02}=0, \\
	G_{21}=-\frac{\alpha+3\beta \omega^2}{4\omega^2}, \\
	\pmb{h}_{11}=\pmb{h}_{20}=\pmb{0} , \\
	\pmb{w}_{11}=\pmb{w}_{20}=\pmb{0}, \\
	g_{21}=G_{21}, 
\end{gather*}

\noindent and
\begin{equation*}
	\beta_2=-\frac{\alpha+3\beta \omega^2}{4\omega^2}.
\end{equation*}

If \(\omega \neq 0\) and \(\alpha+3\beta \omega^2>0\) (\(<0\)), then \(\beta_2<0\) (\(>0\)), and thus a supercritical (subcritical) Hopf bifurcation occurs. For systems with more oscillators, we can analyze the direction of the Hopf bifurcation in the same way, which will not be discussed in this paper. Combined with Lemma \ref{lem4.6}, we conclude

\begin{theorem}\label{thm4.5}
	For a two-oscillator system described by Eq.~(\ref{38*}) with \(ca>0\) and \(N=2\), when \(|\gamma| < 2\omega\) and \(\gamma<2ca\), the origin is locally exponentially stable for \(\gamma<0\) and unstable for \(\gamma>0\), and a continuous one-parameter family of stable (unstable) limit cycles bifurcate from the origin into the region \(\gamma>0\) (\(<0\)) if \(\alpha+3\beta \omega^2>0\) (\(<0\)). 
\end{theorem}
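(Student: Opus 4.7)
The plan is to assemble this theorem by combining Lemma 4.9 with the bifurcation–direction computation of $\beta_2$ carried out immediately above the statement. Lemma 4.9 already handles the stability half: specialized to $N=2$, it asserts that when $|\gamma|<2|\omega|$ and $\gamma<2ca$, the origin is locally exponentially stable for $\gamma<0$ and unstable for $\gamma>0$, and that a Hopf bifurcation occurs at $\gamma=0$. Under $ca>0$ and $\omega\neq 0$, the hypothesis $|\gamma|<2\omega$ (with $\omega$ taken positive without loss of generality) together with $\gamma<2ca$ fits the scope of Lemma 4.9, so first I would simply invoke it to obtain the stability/instability claims and the existence of the Hopf bifurcation from $(\lambda_{2+},\lambda_{2-})$ crossing the imaginary axis at $\gamma=0$.

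Next I would reduce the ``direction of bifurcation'' claim to the sign of $\beta_2$. The preceding paragraphs did the hard computational work: after putting $J_{\pmb f}(0)$ in Jordan form via the matrix $P$ and passing to $\pmb y$-coordinates, the authors computed
\[
g_{11}=g_{20}=g_{02}=0,\qquad \pmb h_{11}=\pmb h_{20}=\pmb 0,\qquad \pmb w_{11}=\pmb w_{20}=\pmb 0,\qquad g_{21}=G_{21}=-\frac{\alpha+3\beta\omega^2}{4\omega^2},
\]
so that $\beta_2=-\dfrac{\alpha+3\beta\omega^2}{4\omega^2}$. Hence $\alpha+3\beta\omega^2>0$ forces $\beta_2<0$ (supercritical), and $\alpha+3\beta\omega^2<0$ forces $\beta_2>0$ (subcritical). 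The only things I need to verify are the non-degeneracy hypotheses required by the bifurcation-direction criterion of [20]: that $\mathrm{Im}(\lambda_{1+}(0))\neq 0$ and $\tfrac{d}{d\gamma}\mathrm{Re}(\lambda_{2\pm})(0)\neq 0$. Both were already recorded in the proof of Lemma 4.9 (the imaginary part equals $\sqrt{\omega^2-\gamma^2/4}\neq 0$ at $\gamma=0$ since $\omega\neq 0$, and the transversality derivative equals $1/2$), so I would just cite those computations.

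Finally I would stitch the two pieces together: by the multi-dimensional Hopf bifurcation theorem, $\beta_2<0$ means a continuous one-parameter family of stable limit cycles emerges on the side where the equilibrium is unstable (i.e.\ $\gamma>0$), while $\beta_2>0$ means the bifurcating family is unstable and lies on the stable side (i.e.\ $\gamma<0$). Combined with the signs of $\beta_2$ derived from $\alpha+3\beta\omega^2$, this yields exactly the conclusion of the theorem.

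The genuinely delicate point—and the one I expect to be the main obstacle if one tried to redo it from scratch—is the computation of $\beta_2$ itself: arranging the eigenvectors of $J_{\pmb f}(0)$ into $P$, inverting the coordinate change, and differentiating the transformed vector field to three orders at the origin. Fortunately, this calculation is already displayed above the statement, so in the proof I would not redo it; I would only point out that because $\pmb h_{11}=\pmb h_{20}=\pmb 0$, the auxiliary vectors $\pmb w_{11},\pmb w_{20}$ vanish and $g_{21}$ collapses to $G_{21}$, making the formula for $\beta_2$ genuinely explicit in $\alpha,\beta,\omega$. Everything else in the proof is a direct appeal to Lemma 4.9 and to the Hopf-bifurcation-direction criterion.
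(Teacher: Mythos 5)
Your proposal is correct and matches the paper's own argument: the paper likewise obtains the stability and Hopf-existence claims from Lemma 4.9 and the bifurcation direction from the displayed computation $\beta_2=-\frac{\alpha+3\beta\omega^2}{4\omega^2}$, concluding supercritical/subcritical according to the sign of $\alpha+3\beta\omega^2$. Your additional explicit check of the non-degeneracy conditions (nonzero imaginary part and transversality) is a welcome clarification but not a departure from the paper's route.
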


In the case of a supercritical Hoft bifurcation, the conclusion about the conditions for synchronization of the network (\ref{38*}) with \(\alpha,\beta \geqslant 0\) also applies here for the case \(\alpha+3\beta \omega^2>0\) with a negative value of \(\alpha\) or \(\beta\). However, there are two issues to be noted, one is that the shape of the limit cycle becomes distorted more rapidly from a circle with increasing \(\gamma\) compared to the case for \(\alpha,\beta \geqslant 0\), and the other is that the stability of the limit cycle becomes local since the origin may not be a unique equilibrium point even for a minimal value of \(\gamma\). Therefore, in the case where either \(\alpha<0\) or \(\beta<0\), the radius of a vague trapping region \({\mathfrak{R}_{RvdP}}_{\omega^2=1}\) of the virtual system (\ref{39*}) can be taken to be only slightly greater than \(\max\{r_j(\gamma)\}(1-\frac{\gamma}{ca})^{-1}\).

\appendix

\section{\label{appA}Proofs of Propositions \ref{prop3.1} and \ref{prop3.2}}

\subsection{\label{appA:subsec1}Proof of Proposition \ref{prop3.1}}

\begin{proof}
	Notice that the virtual system of (\ref{5*}) can be constructed as follows
	\begin{equation} \label{5}
		\dot{\pmb{y}}=\pmb{f}\left(\pmb{y}\right)+\sum_{j=1}^{N} A_j\pmb{x}_j-(\sum_{j=1}^{N} A_j)\pmb{y}.
	\end{equation}
	We have
	\begin{equation*}
		\frac{\partial \pmb{\Phi}}{\partial \pmb{y}}=J_{\pmb{f}}-\sum_{j=1}^{N} A_j.
	\end{equation*}
	By Definition \ref{def2.1}, the contractoin region of (\ref{5}) is the region in which \((\frac{\partial \pmb{\Phi}}{\partial \pmb{y}})_{s}={J_{\pmb{f}}}_s-\sum_{j=1}^{N} {A_j}_s\prec 0\). Therefore, by Theorem \ref{thm3.1}, if \(\pmb{x}_{1}(t)\) and \(\pmb{x}_{2}(t)\) remain within this region, (\ref{5*}) is completely synchronized.
\end{proof}

\subsection{\label{appA:subsec2}Proof of Proposition \ref{prop3.2}}

\begin{proof}
	Notice that the network (\ref{8}) does not have a single virtual system, but it is easy to see that the virtual system of (\ref{8b}) is
	\begin{equation} \label{9}
		\dot{\pmb{y}}=\pmb{f}\left(\pmb{y}\right)+A_1(\pmb{x}_1-\pmb{y})
	\end{equation}
	Since \(\pmb{x}_{i}(t)\) remain in the contraction region of (\ref{9}) where \({J_{\pmb{f}}}_s-{A_{1}}_s \prec 0\) at all times for all \(i\) from 2 to \(N\), by Theorem \ref{thm3.1}, all the oscillators except the first one converge toward the same synchronous state exponentially. Hence Eq.~(\ref{8a}) will reduce to 
	\begin{equation*}
		\dot{\pmb{x}}_1=\pmb{f}\left(\pmb{x}_1\right)+\sum_{j=2}^{N} A_j(\pmb{x}_2-\pmb{x}_1),
	\end{equation*}
	\noindent which describes the state of the first oscillator in the network. Then, the virtual system of the subnetwork consisting of only the first two oscillators is
	\begin{equation} \label{10}
		\dot{\pmb{y}}=\pmb{f}\left(\pmb{y}\right)+A_1\pmb{x}_1+(\sum_{j=2}^{N} A_j)\pmb{x}_2-(\sum_{j=1}^{N} A_j)\pmb{y}.
	\end{equation}
	Therefore, the first two oscillators are completely synchronized as \(\pmb{x}_1(t)\) and \(\pmb{x}_2(t)\) are always inside the contraction region of (\ref{10}) where \({J_{f}}_s-\sum_{j=1}^{N} {A_{j}}_s \prec 0\), and thus the whole newtork (\ref{8}) is completely synchronized by transitivity. 
\end{proof}

\section{\label{appB}Proofs of Lemmas \ref{lem4.1}-\ref{lem4.5}}

\subsection{\label{appB:subsec1}Proof of Lemma \ref{lem4.1}}

\begin{proof}
	Since the coupling is all-to-all and symmetric, Proposition \ref{prop3.1} implies that the network is completely synchronized if \((x_{i}(t)\text{  }y_{i}(t))^T\) are always in  the region where \({J_{\pmb{f}}}_s-\sum_{j=1}^{N} {A_{j}}_s \prec 0\) for all \(i\). Using Eq.~(\ref{37*}) and Schur complement lemma, \({J_{\pmb{f}}}_s-\sum_{j=1}^{N} {A_{j}}_s \prec 0\) iff \(-{J_{\pmb{f}}}_s+\sum_{j=1}^{N} {A_{j}}_s \succ 0\) iff \(\frac{c}{N}\sum_{j=1}^{N} a_{j}>0\) and \((\alpha x^2+3\beta y^2-\gamma)+\frac{c}{N}\sum_{j=1}^{N} b_{j}-\frac{(1-2 \alpha  x y-\omega^2)^2}{4}(\frac{c}{N}\sum_{j=1}^{N} a_{j})^{-1}>0\).
\end{proof}

\subsection{\label{appB:subsec2}Proof of Lemma \ref{lem4.2}}

\begin{proof}
	Note that the origin is an equilibrium point. Since \(\omega \neq 0\), we can construct a continuously
	differentiable function \(V(x_{i},y_{i})=\sum_{i=1}^{N} (\frac{{x_{i}}^2}{2}+\frac{{y_{i}}^2}{2\omega^2})\). This is a  Lyapunov function for the system (\ref{38*}). To see this, we take the derivative of \(V\) with respect to time along an arbitrary trajectory of Eq.~(\ref{38*}).
	
	\begin{equation*}
		\begin{split}
			\dot{V}(x_{i},y_{i})=&\sum_{i=1}^{N} (x_{i}\dot{x}_{i}+\frac{y_{i}\dot{y}_{i}}{\omega^2}) \\
			=&\sum_{i=1}^{N} (x_{i}y_{i}-\frac{ca}{N}x_{i}\sum_{j=1}^{N} (x_{i}-x_{j}) \\
			&-\frac{y_{i}}{\omega^2}\left(\alpha {x_{i}}^{2}+\beta {y_{i}}^2-\gamma\right) y_{i}-\frac{y_{i}}{\omega^2}\omega^{2}x_{i}-\frac{ca}{N\omega^2}y_{i}\sum_{j=1}^{N} (y_{i}-y_{j})) \\
			\leqslant &\frac{\gamma}{\omega^2}\sum_{i=1}^{N} {y_{i}}^2-\frac{ca}{N}\sum_{i,j=1}^{N} x_{i}(x_{i}-x_{j})-\frac{ca}{N\omega^2}\sum_{i,j=1}^{N} y_{i}(y_{i}-y_{j}) \\
			=&\frac{\gamma}{\omega^2}\sum_{i=1}^{N} {y_{i}}^2-\frac{ca}{2N}\sum_{i,j=1}^{N} ({x_{i}}^2-2x_{i}x_{j}+{x_{i}}^2) \\
			&-\frac{ca}{2N\omega^2}\sum_{i,j=1}^{N} ({y_{i}}^2-2y_{i}y_{j}+{y_{i}}^2).
		\end{split}
	\end{equation*}
	
	Notice that \(\sum_{i,j=1}^{N} {x_{i}}^2=\sum_{j,i=1}^{N} {x_{j}}^2\) and that \(\sum_{i,j=1}^{N} {y_{i}}^2=\sum_{j,i=1}^{N} {y_{j}}^2\). We therefore have
	\begin{equation*}
		\begin{split}
			\dot{V}(x_{i},y_{i})\leqslant&\frac{\gamma}{\omega^2}\sum_{i=1}^{N} {y_{i}}^2-\frac{ca}{2N}\sum_{i,j=1}^{N} ({x_{i}}^2-2x_{i}x_{j}+{x_{j}}^2) \\
			&-\frac{ca}{2N\omega^2}\sum_{i,j=1}^{N} ({y_{i}}^2-2y_{i}y_{j}+{y_{j}}^2) \\
			\leqslant&\frac{\gamma}{\omega^2}\sum_{i=1}^{N} {y_{i}}^2-\frac{ca}{2N}\sum_{i,j=1}^{N} (x_{i}-x_{j})^2-\frac{ca}{2N\omega^2}\sum_{i,j=1}^{N} (y_{i}-y_{j})^2 \\
			\leqslant&\frac{\gamma}{\omega^2}\sum_{i=1}^{N} {y_{i}}^2\leqslant 0.
		\end{split}
	\end{equation*}
	
	\noindent Since \(\gamma<0\), the equality holds if and only if \(\pmb{x}(t)=(x_{i}(t)\text{ }y_{i}(t))^T=\pmb{0}\) for \(t\geqslant 0\). Note that
	
	\begin{equation*}
		\begin{split}
			V(\pmb{0})=&0 \\
			V(\pmb{x})>&0\text{, }\forall \pmb{x}\neq \pmb{0} \\
			V(\pmb{x})\rightarrow&\infty\text{, as }\|x\|\rightarrow \infty
		\end{split}
	\end{equation*}
	\noindent Thus, by LaSalle's invariance principle, the origin is globally asymptotically stable.
\end{proof}

\subsection{\label{appB:subsec3}Proof of Lemma \ref{lem4.3}}

\begin{proof}
	Recall that the gradient of \(H\) at a point on a level curve of \(H\) is perpendicular to the level curve and points toward the fastest increase. To verify that the vector field of (\ref{39*}) points inward on the boundary of the disk \({\mathfrak{R}_{RvdP}}_{\omega^2=1}\) at all times, it suffices to show that \(\nabla H \cdot \langle \dot{x}, \dot{y}\rangle = 2x \dot{x}+2y \dot{x} \leqslant 0\) on the circle \(x^2+y^2 = r^2\) at all times. At any point \((x,y)\) on \(x^2+y^2 = r^2\), we have
	\begin{equation*}
		\begin{split}
			&\nabla H \cdot \langle \dot{x}(t), \dot{y}(t)\rangle=2x \dot{x}(t)+2y \dot{y}(t) \\
			&=2xy+2\frac{ca}{N}(\sum_{j=1}^{N} xx_{j}(t)-Nx^2) \\
			&-2\left(\alpha x^{2}+\beta y^2-\gamma\right)y^2-2xy+2\frac{ca}{N}(\sum_{j=1}^{N} yy_{j}(t)-Ny^2) \\
			&\leqslant 2\frac{ca}{N}(\sum_{j=1}^{N} (xx_{j}(t)+yy_{j}(t))-N(x^2+y^2)) \\
			&=2\frac{ca}{N}(\sum_{j=1}^{N} rr_{j}(t)\cos \theta_i(t)-Nr^2) \\
			&\leqslant 2\frac{ca}{N}(\sum_{j=1}^{N} rr_{j}(t)-Nr^2),
		\end{split}
	\end{equation*}
	
	\noindent where \(r_{j}(t)\) represent the distances between the origin and the point \((x_j(t),y_j(t))\) at time \(t\) in the phase plane of (\ref{39*}) and \(\theta_i(t)\) represent the angles between the position vectors \(\langle x, y\rangle\) and \(\langle x_j(t), y_j(t)\rangle\). By Lemma \ref{lem4.2}, the origin is a globally asymptotically stable point of the original network, which means
	
	\begin{equation*}
		\sum_{j=1}^{N} r_{j}^2(t) \leqslant \sum_{j=1}^{N} r_{j}^2(0),
	\end{equation*}
	
	\noindent for all \(t>0\). Suppose that the starting points  \(({x_j}_0,{y_j}_0)\) of all oscillators lie in \({\mathfrak{R}_{RvdP}}_{\omega^2=1}\), or, in other words, \(r_j(0) \leqslant r\) for all \(j\) from 1 to \(N\). Then
	
	\begin{equation*}
		\sum_{j=1}^{N} r_{j}^2(t) \leqslant \sum_{j=1}^{N} {x^2_j}_0+{y^2_j}_0 \leqslant Nr^2.
	\end{equation*}
	
	\noindent Using the Cauchy–Schwarz inequality, we have
	
	\begin{equation*}
		\frac{(\sum_{j=1}^{N} r_{j}(t))^2}{N} \leqslant \sum_{j=1}^{N} r_{j}^2(t) \leqslant Nr^2.
	\end{equation*}
	
	\noindent It follows that
	
	\begin{equation*}
		\sum_{j=1}^{N} r_{j}(t) \leqslant Nr,
	\end{equation*}
	
	\noindent or,
	
	\begin{equation*}
		\sum_{j=1}^{N} rr_{j}(t) \leqslant Nr^2,
	\end{equation*}
	
	\noindent Therefore,
	
	\begin{equation*}
		\nabla H \cdot \langle \dot{x}(t), \dot{y}(t)\rangle \leqslant 2\frac{ca}{N}(\sum_{j=1}^{N} rr_{j}(t)-Nr^2) \leqslant 0
	\end{equation*}
\end{proof}

\subsection{\label{appB:subsec4}Proof of Lemma \ref{lem4.4}}

\begin{proof}
	
	Since \(ca >0\) and \(ca > \gamma\), \(c^{2}a^{2}-ca\gamma=ca(ca-\gamma) >0\). Take an arbitrary point \((x^*,y^*)\) in \({\mathfrak{R}_{RvdP}}_{\omega^2=1}\). We want to show \(g(x^*,y^*) > 0\) so that \((x^*,y^*) \in {\mathfrak{C}_{RvdP}}_{\omega^2=1}\).
	
	\textit{Case 1}:  \(3 \beta r^2>\gamma-ca\) and \(\alpha = 0\). In this case,
	
	\begin{equation*}
		g(x^*,y^*)=G(y^*)=ca(ca-\gamma)+3ca \beta  {y^*}^2 =ca(ca-\gamma+3\beta  {y^*}^2),
	\end{equation*}
	
	\noindent where \(y^* \in [-r,r]\). For \(\beta \geqslant 0\), \(G(y)\) has a minimum at \(y=0\), and we have \(G(0)=ca(ca-\gamma)>0\). For \(\beta < 0\) and \(y \in [-r,r]\), \(G(y)\) has minima at \(y=\pm r\). Since \(3 \beta r^2>\gamma-ca\), \(G(\pm r)=ca(ca-\gamma+3\beta  r^2)>0\). Consequently, \(G(y)>0\) for all \(y \in [-r,r]\). Therefore, \(g(x^*,y^*)=G(y^*) > 0\).

	\textit{Case 2}: \(r^2 \leqslant \frac{ca\alpha-3ca\beta}{\alpha^2}\) and \(3 \beta r^2>\gamma-ca\) with \(0 \neq \alpha>3\beta\). 
	
	The partial derivatives of \(g(x,y)\) are
	\begin{equation*}
		\begin{split}
			g_{x}(x,y)&=2ca\alpha x-2\alpha^2 y^{2}x \\
			g_{y}(x,y)&=6ca\beta y-2\alpha^2 x^{2}y.
		\end{split}
	\end{equation*}
	
	\noindent Notice that \(g_{x}(x,y)=0\) when \(x=0\) or \(y^2=\frac{ca}{\alpha}\) (if \(\alpha>0\)) and that \(g_{y}(x,y)=0\) when \(y=0\) or \(x^2=\frac{3ca\beta}{\alpha^2}\) (if \(\alpha \neq 0\) and \(\beta \geqslant 0\)). Thus, for \(\beta<0\), \(g(x,y)\) only has one critical point, \((0,0)\), at which \(g(0,0)=ca(ca-\gamma) >0\). For \(\beta \geqslant 0\), \(g(x,y)\) has five critical points, \((0,0)\), \((\pm \sqrt{\frac{3ca\beta}{\alpha^2}}, \pm \sqrt{\frac{ca}{\alpha}})\), if \(\alpha>0\). However, 
	
	\begin{equation*}
		r^2 \leqslant \frac{ca\alpha-3ca\beta}{\alpha^2} \leqslant \frac{ca\alpha+3ca\beta}{\alpha^2},
	\end{equation*}
	
	\noindent which indicates that there is only one critical point, \((0,0)\), in the interior of \({\mathfrak{R}_{RvdP}}_{\omega^2=1}\) at which \(g(0,0)>0\).
	
	On the boundary, \(x^2+y^2=r^2\), of \({\mathfrak{R}_{RvdP}}_{\omega^2=1}\), we have
	
	\begin{gather*}
		g(x,y)=h(x)	\equiv \\
		ca(ca-\gamma)+ca\alpha  x^2+3ca \beta (r^2-x^2)- \alpha^2  x^2 (r^2-x^2) \\
		=\alpha^2  x^4+(ca\alpha-3ca \beta-\alpha^2 r^2)x^2+ca(ca-\gamma)+3ca \beta r^2 > 0
	\end{gather*}
	\noindent and
	\begin{equation*}
		h^\prime(x)=4\alpha^2  x^3+2(ca\alpha-3ca \beta-\alpha^2 r^2)x.
	\end{equation*}
	
	\noindent Since \(r^2 \leqslant \frac{ca \alpha-3ca \beta}{\alpha^2}\), we have \(ca\alpha-3ca \beta-\alpha^2 r^2 \geqslant 0\). So \(h^{\prime}(x)=0\) only at \(x=0\). Furthermore, \(h^{\prime}(x)>0\) when \(x>0\) and \(h^{\prime}(x)<0\) when \(x<0\). It follows that \(h(x)\) achieves a minimum at \(x=0\). 
	
	\begin{equation*}
		h(0)=ca(ca-\gamma)+3 ca \beta r^2=ca(ca-\gamma+3 \beta).
	\end{equation*}
	
	\noindent Since \(3 \beta r^2>\gamma-ca\), \(h(0)>0\). Then \(h(x)>0\) for all \(x\). Thus, \(g(x^*,y^*) > 0\).
	
	\textit{Case 3}: \(r^2 \leqslant \frac{3ca \beta-ca \alpha}{\alpha^2}\) and \(\alpha r^2>\gamma-ca\) with \(0 \neq \alpha<3\beta\). 
	
	For \(\alpha<0\), \(g(x,y)\) has only one critical point \((0,0)\) at which \(g(0,0)=ca(ca-\gamma) >0\). For \(\alpha>0\),
	
	\begin{equation*}
		r^2 \leqslant \frac{3ca \beta-ca \alpha}{\alpha^2} < \frac{ca\alpha+3ca\beta}{\alpha^2}.
	\end{equation*}
	
	\noindent Only one critical point in \({\mathfrak{R}_{RvdP}}_{\omega^2=1}\) is still \((0,0)\), and \(g(0,0)>0\).
	
	On the boundary of \({\mathfrak{R}_{RvdP}}_{\omega^2=1}\), \(g(x,y)=h(x)\) for \(x \in [-r,r]\). Since \(\alpha<3\beta\), \(h^{\prime}(x)=0\) at \(x=0\) and at
	
	\begin{equation*}
		x=x_{\pm}=\pm\frac{\sqrt{-2(ca\alpha-3ca \beta-\alpha^2 r^2)}}{2 \alpha}.
	\end{equation*}
	
	\noindent We have
	
	\begin{equation*}
		h^{\prime \prime}(x)=12\alpha^2  x^2+2(ca\alpha-3ca \beta-\alpha^2 r^2),
	\end{equation*}
	
	\noindent so that
	
	\begin{align*}
		h^{\prime \prime}(0)=&2(ca\alpha-3ca \beta-\alpha^2 r^2) <0, \\
		h^{\prime \prime}(x_{1\pm})=&-4(ca\alpha-3ca \beta-\alpha^2 r^2) >0.
	\end{align*}
	
	\noindent It follows that  \(h(x)\) reaches its global minimum value at \(x=x_{+}\) or at \(x=x_{-}\). But since \(r^2 \leqslant \frac{3ca \beta-ca \alpha}{\alpha^2}\), 
	
	\begin{gather*}
		2 \alpha^2 r^2 \leqslant -2(ca \alpha-3ca \beta) \\
		4 \alpha^2 r^2 \leqslant -2(ca \alpha-3ca \beta-\alpha^2 r^2) \\
		r^2 \leqslant \frac{-2(ca \alpha-3ca \beta-\alpha^2 r^2)}{4\alpha^2}=x^2_{\pm}
	\end{gather*}
	
	\noindent Therefore, for \(x \in [-r,r]\), \(h(x)\) achieves its minimum at \(x=r\) or at \(x=-r\). Since \(\alpha r^2>\gamma-ca\),
	
	\begin{equation*}
		h(\pm r)=ca\alpha r^2+ca(ca-\gamma)=ca(\alpha r^2+ca-\gamma) > 0.
	\end{equation*}
	
	\noindent So \(h(x)>0\) for all \(x \in [-r,r]\). Therefore, \(g(x^*,y^*) > 0\).
	
	\textit{Case 4}: \(r^2 \geqslant \max\{\frac{ca \alpha-3ca \beta}{\alpha^2}, \frac{3ca \beta-ca \alpha}{\alpha^2}\}\) and \(\frac{ca \alpha+3ca \beta-2\sqrt{3c^2a^2 \alpha \beta+\alpha^2 ca (ca-\gamma)}}{\alpha^2}< r^2 < \frac{ca \alpha+3ca \beta+2\sqrt{3c^2a^2 \alpha \beta+\alpha^2 ca (ca-\gamma)}}{\alpha^2}\) with \(3ca\alpha\beta+\alpha^2 (ca-\gamma) > 0\).
	
	Notice from \(3ca\alpha\beta+\alpha^2 (ca-\gamma) > 0\) that \(\alpha \neq 0\). In this case, \(g(x,y)\) has one critical point, \((0,0)\), if \(\alpha<0\) or \(\beta<0\), otherwise \(g(x,y)\) has five critical points, \((0,0)\) and \((\pm \sqrt{\frac{3ca\beta}{\alpha^2}}, \pm \sqrt{\frac{ca}{\alpha}})\), that could lie in the interior of \({\mathfrak{R}_{RvdP}}_{\omega^2=1}\) since
	
	\begin{equation*}
		\frac{ca\alpha+3ca\beta}{\alpha^2} \leqslant \frac{ca \alpha+3ca \beta+2\sqrt{3c^2a^2 \alpha \beta+\alpha^2 ca (ca-\gamma)}}{\alpha^2}
	\end{equation*}
	
	\noindent for \(\alpha>0\) and \(\beta>0\). As before, \(g(0,0) >0\). At points \((\pm \sqrt{\frac{3ca\beta}{\alpha^2}}, \pm \sqrt{\frac{ca}{\alpha}})\), 
	
	\begin{align*}
		g(x,y)=&ca(ca-\gamma)+ca\alpha \frac{3ca\beta}{\alpha^2}+3ca \beta \frac{ca}{\alpha}- \alpha^2  \frac{3ca\beta}{\alpha^2} \frac{ca}{\alpha} \\
		=&ca(ca-\gamma)+\frac{3c^2a^2\beta}{\alpha}+\frac{3c^2a^2\beta}{\alpha}-\frac{3c^2a^2\beta}{\alpha} \\
		=&ca(ca-\gamma)+\frac{3c^2a^2\beta}{\alpha} \\
		=&\frac{ca}{\alpha^2}[\alpha^2(ca-\gamma)+3ca\alpha \beta]>0. \\
	\end{align*}
	
	On the boundary of \({\mathfrak{R}_{RvdP}}_{\omega^2=1}\), we have seen in Case 3 that \(h(x)\) reaches its global minimum value at \(x=x_{+}\) or at \(x=x_{-}\) because \(r^2 \geqslant \frac{ca \alpha-3ca \beta}{\alpha^2}\). In this case, \(x_{\pm} \in [-r,r]\) since \(r^2 \geqslant \frac{3ca \beta-ca \alpha}{\alpha^2}\). Thus, for \(x \in [-r,r]\), \(h(x)\) has a minimum at \(x=x_{+}\) or at \(x=x_{-}\). Note that \small
	\begin{gather*}
		h(x_{\pm}) = l(r^2) \equiv\frac{-(ca\alpha-3ca \beta-\alpha^2 r^2)^2}{4\alpha^2}+ca(ca-\gamma)+3ca \beta r^2 \\
		=-\frac{\alpha^2}{4}r^4+\frac{ca\alpha+3ca\beta}{2}r^2-\frac{(ca\alpha-3ca\beta)^2}{4\alpha^2}+ca(ca-\gamma).
	\end{gather*}
	
	\noindent \normalsize Since \(3ca\alpha\beta+\alpha^2 (ca-\gamma) > 0\), \(3c^2a^2 \alpha \beta+\alpha^2ca(ca-\gamma)>0\). Observe that at \(r^2=\frac{ca \alpha+3ca \beta\pm 2\sqrt{3c^2a^2 \alpha \beta+\alpha^2ca(ca-\gamma)}}{\alpha^2}\), \(l(r^2)=0\). Hence \(l(r^2) > 0\) given 
	
	\begin{gather*}
		\frac{ca \alpha+3ca \beta-2\sqrt{3c^2a^2 \alpha \beta+\alpha^2ca(ca-\gamma)}}{\alpha^2} < r^2 \\
		< \frac{ca \alpha+3ca \beta+2\sqrt{3c^2a^2 \alpha \beta+\alpha^2ca(ca-\gamma)}}{\alpha^2}.
	\end{gather*}
	
	\noindent Thus, \(h(x)>0\) for all \(x \in [-r,r]\), and therefore \(g(x^*,y^*) > 0\).
	
	Therefore, \({\mathfrak{R}_{RvdP}}_{\omega^2=1} \subset {\mathfrak{C}_{RvdP}}_{\omega^2=1}\) in all four cases.
\end{proof}

\subsection{\label{appB:subsec5}Proof of Lemma \ref{lem4.5}}

\begin{proof}
	The case when \(N=1\) is trivial. We assume Eq.~(\ref{40*}) is true for \(N \geqslant 1\) and choose any row or column to calculate the determinants of \(W_{N+1}\) and \(S_{N+1}\). By switching rows, one can find
	\begin{equation*}
		\begin{split}
			\text{det}[W_{N+1}]=&a(\text{det}[W_{N}])-Nb(\text{det}[S_N]) \\
			=&a(a-b)^{N-1}[a+(N-1)b]-Nb^2(a-b)^{N-1} \\
			=&(a-b)^{N-1}[a^2+(N-1)ab-Nb^2] \\
			=&(a-b)^{N-1}[(a-b)(a+Nb)] \\
			=&(a-b)^{N}(a+Nb),
		\end{split}
	\end{equation*}
	
	\noindent and
	\begin{equation*}
		\begin{split}
			\text{det}[S_{N+1}]=&b(\text{det}[W_{N}])-Nb(\text{det}[S_N]) \\
			=&b(a-b)^{N-1}[a+(N-1)b]-Nb^2(a-b)^{N-1} \\
			=&(a-b)^{N-1}[ab-b^2] \\
			=&b(a-b)^{N}.
		\end{split}
	\end{equation*}
	
	\noindent By mathematical induction, this proof is completed.
\end{proof}

\nocite{*}
\bibliography{Chaos_format}

\end{document}